\pdfoutput=1
\documentclass[11pt]{article}

\usepackage[T1]{fontenc}
\usepackage{amsmath,amssymb,amsthm,mathtools}
\usepackage{enumerate}
\usepackage{xcolor}
\usepackage{xurl}
\usepackage[margin=1in]{geometry}
\usepackage[hidelinks,hypertexnames=false]{hyperref}

\theoremstyle{plain}
\newtheorem{theorem}{Theorem}
\newtheorem{proposition}[theorem]{Proposition}
\newtheorem{lemma}[theorem]{Lemma}
\newtheorem{corollary}[theorem]{Corollary}
\theoremstyle{definition}
\newtheorem{remark}[theorem]{Remark}

\newcommand{\val}[2]{[#1]_{#2}}
\newcommand{\NpD}{\mathcal{N}_{\mathrm{pD}}}
\newcommand{\Dyck}{\mathcal{D}}
\newcommand{\DP}{\mathcal{D}_{\mathrm{P}}}
\newcommand{\Pset}{\mathcal P}

\title{Additive Bases from Primitive Dyck Words:\\
Regular Underapproximations, Motzkin Coding, and Digit Lifting}
\author{Takayuki Kuriyama\\
\small Growup Learning Academy, Tokyo, Japan\\
\small \texttt{growup.kuriyama@gmail.com}}
\date{}

\begin{document}
\maketitle

\begin{abstract}
We study additive representations by integers whose canonical binary
expansions are primitive Dyck words.  Pairing consecutive bits yields a
positional form of the classical relation between Dyck paths and
two-coloured Motzkin paths: except for $10$, primitive Dyck words are
exactly the binary block images of base-$4$ words $3w0$, where $w$ is a
two-coloured Motzkin word.  This exposes a regular underapproximation,
digit closure, and sharp generation bounds.  We prove an interval
digit-lifting theorem for digitally closed sets and a constructive
base-$4$ propagation algorithm that lifts finite sumset certificates to
infinite tails in logarithmically many recursive stages.  Combining these
tools with exact finite certificates and generation-gap lower bounds, we
classify all positive even integers requiring more than six primitive Dyck
summands.  The integer $46$ requires eight, and
$34,44,98,154,198,202,206,838,842,846$ require seven; every other positive
even integer requires at most six.  Thus $848$ is the sharp eventual
threshold.  The bound is asymptotically optimal because
$10\cdot4^{k+1}-6$ requires six summands for every $k\geq2$.  The associated
halved family has exact asymptotic additive order five.  Supplementary
programs reproduce all finite certificates using exact integer arithmetic.
\end{abstract}

\medskip
\noindent\textbf{Keywords:} Dyck language; primitive Dyck word; deterministic context-free
language; regular underapproximation; additive basis; two-coloured Motzkin
word; numeration system; digit lifting.

\noindent\textbf{2020 Mathematics Subject Classification:}
Primary 68Q45, 11B13; Secondary 05A15.

\section{Introduction}\label{sec:introduction}

Numeration systems provide a natural interface between formal-language
theory and additive number theory.  Given a language $L$ over a digit
alphabet, one may read the words of $L$ as integers and ask whether the
resulting value set is an additive basis, what its least order is, and
whether the exceptional targets can be classified effectively.  For
regular languages, these questions can often be approached through finite
automata.  Bell, Hare, and Shallit gave general results for automatic
additive bases~\cite{BHS}, and related methods have produced sharp results
for binary palindromes~\cite{RSS} and binary squares~\cite{MNRS}.

Bell, Lidbetter, and Shallit applied regular-language approximation to the
language of balanced binary words and obtained strong additive theorems
for the corresponding totally balanced numbers~\cite{BLS}.  Their work is
a particularly close predecessor of the present paper.  Regular
approximations of context-free languages have a broader history, including
constructions of both underapproximations and overapproximations by finite
automata~\cite{Nederhof,MohriNederhof}.

Language-based numeration has also been studied in the framework of
abstract numeration systems~\cite{RigoNumeration}.  In particular,
Charlier, Le Gonidec, and Rigo considered a generalized numeration system
built on the Dyck language~\cite{CLGR}.  Their numerical map is induced by
an ordering of the underlying language, whereas the present paper reads
each word by its ordinary positional binary value.  On the combinatorial
side, bijections relating Dyck paths or ordered trees to two-coloured
Motzkin paths are classical~\cite{DeutschShapiro}.  Our use of Motzkin
paths is tailored to the paired binary digits and hence retains exact
base-$2$/base-$4$ positional values.

The proof below is hybrid in a precise but deliberately limited sense.  An
explicit regular underapproximation supplies the six-summand covering used
for targets divisible by $4$.  For targets congruent to $2$ modulo $4$, our
argument instead uses a five-summand covering by the full Motzkin-coded
family.  We do not claim that every possible regular underapproximation
must fail for the latter residue class.

Let $\Dyck$ be the Dyck language over $\{1,0\}$, with $1$ interpreted as
an up-step and $0$ as a down-step.  A nonempty Dyck word is
\emph{primitive} if its path returns to height zero only at the endpoint.
Let $\DP$ denote the language of primitive Dyck words.  Since
$\DP=1\Dyck0$, it is deterministic context-free.  It is not regular,
because
\[
 \DP\cap1^*0^*=\{1^n0^n:n\geq1\}.
\]
We read the words of $\DP$ as binary integers and adjoin $0$; the resulting
set is denoted by $\NpD$.

The primitive restriction is substantial.  Every nonempty Dyck word
factors uniquely into primitive factors, but an additive representation in
our problem requires each summand itself to be represented by a single
primitive word.  Thus a representation by unrestricted balanced words does
not preserve the number of summands after passage to primitive factors.
Moreover, the language $\DP$ is not available to finite-automaton methods
in its entirety.

The key structural step is a positional block coding.  Pairing consecutive
binary digits produces base-$4$ digits, and the primitive condition becomes
a Motzkin prefix condition.  We prove the exact language identity
\[
 \DP=\{10\}\cup\phi(3\mathsf{Motz}0),
\]
where $\phi(0)=00$, $\phi(1)=01$, $\phi(2)=10$, and $\phi(3)=11$.
The novelty used here is not the general existence of Dyck--2-Motzkin
bijections, but the compatibility of this primitive-word block map with
canonical positional value.  The regular language $11(01\mid10)^*00$
then supplies one additive covering, while the full Motzkin-coded family
supplies the digit closure used in the other covering.

Our contributions are fourfold.
\begin{enumerate}[(i)]
\item We adapt the classical Dyck--2-Motzkin viewpoint to canonical binary
valuation and obtain an exact base-$4$ characterization of the primitive
Dyck value set, together with sharp generation bounds.
\item We separate two complementary roles in the additive proof: an
explicit regular underapproximation covers the multiples of $4$, while a
five-summand theorem for the full Motzkin-coded family covers the auxiliary
targets needed in the residue class $2$ modulo $4$.
\item We formulate a general interval digit-lifting theorem, prove a
base-$4$ tail-propagation principle, and extract a constructive algorithm
that converts a finite table of seed representations into representations
throughout the infinite tail.
\item We determine the complete exceptional set, the sharp eventual
threshold, and the exact relative asymptotic order; generation gaps also
give an infinite lower-bound sequence.
\end{enumerate}

For a positive even integer $N$, let $r(N)$ be the least number of positive
primitive Dyck numbers whose sum is $N$.  This is well defined because
$10\in\DP$ represents $2$, so every positive even $N$ is a sum of $N/2$
copies of $2$.  Our first main result classifies exactly the values above
level six.

\begin{theorem}\label{thm:classification}
The values of $r(N)$ exceeding six are exactly as follows:
\[
\begin{aligned}
 r(46)&=8,\\[4pt]
 r(N)=7\quad&\Longleftrightarrow\quad
 N\in\{34,44,98,154,198,202,206,838,842,846\}.
\end{aligned}
\]
Every other positive even integer satisfies $r(N)\leq6$.  Equivalently,
$848$ is the least even integer $T$ such that every even $N\geq T$ is a
sum of at most six primitive Dyck numbers.
\end{theorem}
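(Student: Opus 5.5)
The proof splits into a finite part and an infinite tail. Below some explicit bound $B$, exact computation settles everything. Above $B$, a constructive upper bound of six summands is proved separately for $N\equiv0\pmod4$ and $N\equiv2\pmod4$. The lower bounds $r(N)\geq7$ (and $r(46)\geq8$) are purely finite statements.

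For the finite part, enumerate the finite set $\NpD\cap[0,B]$ using the base-$4$ characterization $\DP=\{10\}\cup\phi(3\mathsf{Motz}0)$. Then compute the iterated sumsets $k\NpD\cap[0,B]$ for $k\le 8$ by exact integer dynamic programming, i.e.\ shortest representation lengths over the finite set. This yields $r(46)=8$, $r(N)=7$ exactly on the listed ten values, and $r(N)\le6$ for all other even $N\le B$. Only positive summands count; since $0\in\NpD$, ``at most six'' is the same as $r(N)\le6$. Each lower bound is certified by exhaustion, because any summand of $N$ is at most $N$. The sharp-threshold reformulation then follows, since $846$ is the largest exception and all $N\ge848$ satisfy $r(N)\le 6$.

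For the tail with $4\mid N$, use the regular underapproximation $11(01\mid10)^*00$. Its values are $3\cdot4^{m+1}+4\sum_i d_i4^i$ with $d_i\in\{1,2\}$. Choose lengths so that the leading $3\cdot4^{\cdot}$ terms and the $\{1,2\}$-digit sums cover all large multiples of $4$. Six such summands let the sums of digits range over an interval of base-$4$ digits, since sums of $\{0,1,2\}$ per summand, with shorter words contributing $0$ at high positions, cover $\{0,\dots,12\}$ or a contiguous subrange. The interval digit-lifting theorem then applies with carries. Concretely, fix a finite seed table for the top few base-$4$ digits of $N$. Then invoke the base-$4$ propagation algorithm to extend digit by digit down to the $00$ terminal block. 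This takes logarithmically many stages, each staying in the same contiguous digit window.

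For $N\equiv2\pmod4$, subtract one fixed small element so the remainder is an auxiliary target. Candidates are $2=\val{10}{2}$, or another $\DP$ value congruent to $2\pmod 4$, chosen according to the top digits. Then represent the remainder by five elements of the full Motzkin-coded family $\phi(3\mathsf{Motz}0)$, using the five-summand theorem for that family, whose digit closure supplies the interval property. I expect the main obstacle to be matching the leading-digit constraints. Every summand begins with $3$ in base $4$, or is $2$, so the seed table must handle all top-digit patterns of $N$ with the right number of summands of each length. Getting the seed range to overlap the computed finite range up to $B$ is what forces $B$ to be reasonably large, around $4^{k}$ for moderate $k$. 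I would first try to take seeds for all top windows of three or four base-$4$ digits, verified by computer, and pick $B$ just past the largest seed so that the finite computation closes the gap.
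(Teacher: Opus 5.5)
Your plan is essentially the paper's proof. It has the same three components: exact finite computation for small $N$ (the paper needs it only for multiples of $4$ below $100$ and for $N=4m+2$ with $m\leq211$, where $\Pset\cap[0,211]=L$); six-summand digit lifting in the regular family $\mathcal E$ for multiples of $4$; and, for $N=4m+2\geq850$, one summand $2$ plus at most five elements of $4\cdot\Pset$ via $[212,\infty)\subseteq F_5(\Pset)$.

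Two small corrections. First, by Lemma~\ref{lem:mod4} the number $2$ is the only primitive Dyck number congruent to $2$ modulo $4$, so there is nothing else to subtract. Second, the lifting appends a digit from $\{1,2\}$ to each of exactly six positive summands (window $[6,12]$, no zero padding), so your seed must be a full interval $[M,4M+5]\subseteq6\mathcal E^+$. Since $6\mathcal E^+$ has least element $18$ and omits $63$ and $217$, this forces $M\geq218$ (the paper's choice), and the seed necessarily reaches five-digit base-$4$ values rather than three or four.
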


The eventual bound cannot be lowered.

\begin{theorem}\label{thm:asymptotic-order}
Let
\[
 h_{\mathrm{asym}}
 =\min\{h:\text{every sufficiently large even }N
              \text{ satisfies }r(N)\leq h\}.
\]
Then $h_{\mathrm{asym}}=6$.  More precisely, for every integer $k\geq2$,
\[
 N_k=10\cdot4^{k+1}-6
\]
satisfies $r(N_k)=6$.
\end{theorem}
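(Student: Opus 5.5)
The upper bound $r(N_k)\le 6$ comes straight from Theorem~\ref{thm:classification}. Since $N_k=10\cdot4^{k+1}-6\geq 634$, we can check that $N_k\notin\{34,44,46,98,\dots,846\}$: for $k=2$ we get $N_2=634$, which is not in the list, and for $k\ge3$ we have $N_k>846$. So the whole task is the lower bound: $N_k$ is not a sum of at most five positive primitive Dyck numbers. Once that holds for infinitely many $k$, $h_{\mathrm{asym}}\ge 6$ follows, and Theorem~\ref{thm:classification} gives $h_{\mathrm{asym}}\le6$.

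\emph{Step 1 (residues).} Apart from $10$, every word in $\DP$ has the form $\phi(3w0)$, so every summand other than $2$ is divisible by $4$. Since $N_k\equiv 2\pmod 4$, any representation uses an odd number $j\in\{1,3,5\}$ of copies of $2$. The case $j=5$ is impossible, because $N_k\neq 10$. Dividing the remaining summands by $4$ reduces the problem to the halved family
\[
H=\{\val{3w}{4}: w\in\mathsf{Motz}\}.
\]
In this language we must show two things: $10\cdot4^k-2$ is not a sum of at most four elements of $H$ (case $j=1$), and $10\cdot4^k-3$ is not a sum of at most two elements of $H$ (case $j=3$).

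\emph{Step 2 (generation bounds).} Using the sharp generation bounds proved earlier, the elements of $H$ with $m$ base-$4$ digits lie between $\val{31\cdots1}{4}$ and the value of the maximal Motzkin-coded word of that length. There is a gap above each generation, and the next generation starts near $\tfrac{10}{3}4^{m-1}$. Since $10\cdot4^k = \val{220\cdots0}{4}$, every summand must have at most $k+1$ base-$4$ digits, and the top digit of a summand of maximal length is $3$. A size count then constrains the multiset of summand lengths. With at most four summands each below about $4^{k+1}$, reaching roughly $2.5\cdot4^{k+1}$ forces at least three summands of full length $k+1$, or near-full length. For $j=3$, two summands cannot reach $10\cdot4^k-3$ unless both have length $k+1$; then their leading digits $3+3$ already overshoot the leading block $22$, and the generation gap excludes every other length. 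So I expect the case $j=3$ to close from the bounds alone.

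\emph{Step 3 (digit and carry analysis, the main obstacle).} In the case $j=1$, write the target in base $4$ as $10\cdot4^k-2=\val{213\cdots32}{4}$. I will compare this with column sums of three or four elements $3w_i$, tracking carries from the least significant end. The Motzkin prefix condition restricts the last digit of each $w_i$: the final step cannot be an up-step $3$, and a $0$ requires positive height. It also restricts how many $3$'s can occur near the top of each word. The top columns must produce the digits $2,1$ with incoming carry, from three or four leading $3$'s plus the next digits. The long run of $3$'s in the target then forces every lower column to generate carry $1$ or $2$ uniformly. I would then show that these constraints are incompatible with the height constraint: each word must eventually descend to height $0$, and the resulting down-step digits $0$ lower the column sums below what the required carries need. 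I expect this carry bookkeeping, with a careful finite case split on the number and lengths of the summands, to be the delicate part. For small $k$ (here $k=2,3$) I would back it up with the exact finite certificates and the supplementary programs.
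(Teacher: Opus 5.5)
Your outer structure matches the paper. The upper bound $r(N_k)\le6$ comes from Theorem~\ref{thm:classification}: $N_2=634$ is not exceptional and $N_k>846$ for $k\ge3$. The parity-of-twos argument correctly reduces the lower bound to $x_k\notin F_4(\Pset)$ and $x_k-1\notin F_2(\Pset)$, where $x_k=10\cdot4^k-2$. The gap is the case $j=1$, which is the heart of the theorem. Step~3 is only a programme: a ``delicate'' carry analysis that you never carry out, to be backed by computation for small $k$. As it stands, nothing is proved for general $k$.

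The carry pattern you anticipate is also not what happens. Four summands of length $k+1$ give a top column of at least $12$, which is too big. With exactly three, the top column $3+3+3=9=[21]_4$ forces the incoming carry to be $0$. So the column just below the top emits no carry, and the lower $k$ digits of all summands must add to exactly $[3^{k-1}2]_4=4^k-2$. That is impossible, because the three Motzkin tails alone contribute at least $3\cdot\frac{4^k-1}{3}=4^k-1$.

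The missing idea is to finish the size count you began in Step~2, using the exact endpoints of Corollary~\ref{cor:gen}. A summand has either exactly $k+1$ digits or at most $k$ digits; your ``near-full length'' alternative does not exist. Summands with at most $k$ digits are at most $U_k=4^k-2^{k-1}$.
\begin{itemize}
\item If at most two summands have $k+1$ digits, the total is at most $2U_{k+1}+2U_k=10\cdot4^k-3\cdot2^k<x_k$.
\item If at least three do, the total is at least $3g_{k+1}=3[31^k]_4=10\cdot4^k-1=x_k+1$, by \eqref{eq:three-g}.
\end{itemize}
You quoted the lower endpoint $[31\cdots1]_4\approx\frac{10}{3}4^{k}$ but never multiplied it by three, and that is exactly where the obstruction sits. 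This argument is uniform in $k\ge2$ and needs no computer check.

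In the case $j=3$ your conclusion is correct, but the stated reason is backwards. Two summands of length $k+1$ undershoot rather than overshoot: $3+3=6=[12]_4$ lies below the leading block $22$. The clean statement is $2U_{k+1}=8\cdot4^k-2^{k+1}<x_k-1$.
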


Section~\ref{sec:preliminaries} fixes the language and additive notation.
Section~\ref{sec:base4} proves the positional Motzkin block coding and
generation bounds.  Section~\ref{sec:lifting} develops digit lifting, its
constructive tail algorithm, the regular underapproximation, and the finite
certificates.  Section~\ref{sec:exceptions}
proves the complete classification, and Section~\ref{sec:asymptotic}
proves asymptotic optimality.  The supplementary verification algorithms
are described in Section~\ref{sec:verification}.

\section{Languages, values, and additive notation}\label{sec:preliminaries}

A word $w\in\{1,0\}^*$ is a \emph{Dyck word} if it contains equally many
$1$'s and $0$'s and every prefix contains at least as many $1$'s as
$0$'s.  The empty word is denoted by $\lambda$.  A nonempty Dyck word is
\emph{primitive} if no proper nonempty prefix is balanced.  Equivalently,
its height path first returns to zero at the endpoint.  We write $\Dyck$
for the Dyck language and $\DP$ for its primitive sublanguage.  Every
primitive word is uniquely of the form $1u0$ with $u\in\Dyck$.

For a word $w$ over the base-$b$ digits, let $\val{w}{b}$ denote its
numerical value.  For a language $L$ over those digits, write
\[
 \operatorname{val}_b(L)=\{\val{w}{b}:w\in L\}.
\]
Define
\[
 \NpD=\{0\}\cup\{\val{w}{2}:w\in\DP\}
 \subseteq\mathbb Z_{\geq0}.
\]
Every positive member of $\NpD$ is even.  The following stronger terminal
property will be used repeatedly.

\begin{lemma}\label{lem:mod4}
The number $2$ is the only positive primitive Dyck number congruent to
$2$ modulo $4$.  Every other positive primitive Dyck number is divisible
by $4$.
\end{lemma}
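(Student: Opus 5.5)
The plan is to read off the residue modulo $4$ from the last two binary digits of the primitive word. A positive primitive Dyck number is $\val{w}{2}$ with $w\in\DP$. By the structural fact recalled above, $w=1u0$ with $u\in\Dyck$. The value modulo $4$ is determined by the last two letters of $w$: the final letter is $0$, so the value is even. It is $\equiv 2 \pmod 4$ exactly when the penultimate letter is $1$. Everything therefore reduces to identifying the primitive words whose second-to-last letter is $1$.

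First, suppose $u=\lambda$. Then $w=10$ and the value is $2$, which is indeed $\equiv 2\pmod 4$.

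Next, suppose $u\neq\lambda$. The last letter of $u$ is then the penultimate letter of $w$. A nonempty Dyck word must end in $0$, because its final prefix has balance zero. Were the last letter a $1$, the prefix obtained by deleting it would have balance $-1$, contradicting the prefix condition. Hence $w$ ends in $00$, and $\val{w}{2}\equiv 0\pmod 4$.

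Combining the two cases gives both assertions of Lemma~\ref{lem:mod4}. The value $2$ comes only from $w=10$, and every other positive primitive Dyck number is divisible by $4$. No step is a genuine obstacle. The only point needing care is the parity and balance observation that a nonempty Dyck word ends in $0$. It may also be worth remarking that $w$ has no leading zeros, since it begins with $1$, so distinct words give their canonical values directly.
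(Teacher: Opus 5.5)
Your proof is correct and follows essentially the same route as the paper: both treat $w=10$ separately and show that every longer primitive word ends in $00$ by excluding a penultimate $1$. The only difference is how that step is justified. The paper observes that a terminal $10$ would leave a nonempty balanced proper prefix, contradicting primitivity, whereas you write $w=1u0$ and use the fact that a nonempty Dyck word $u$ cannot end in $1$.
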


\begin{proof}
Let $w\in\DP$.  Its last letter is $0$.  If $|w|=2$, then $w=10$ and
$[w]_2=2$.  Suppose $|w|\geq4$ and write $w=ub0$ with $b\in\{0,1\}$.
If $b=1$, the suffix $10$ has net height zero, so the proper prefix $u$
is balanced, contradicting primitivity.  Hence $b=0$, the word ends in
$00$, and $4\mid[w]_2$.
\end{proof}

Among the $C_n$ Dyck words of length $2n$, exactly $C_{n-1}$ are
primitive, so their proportion tends to $1/4$.  Thus primitive words are
not sparse at a fixed length; the additive obstructions below arise from
their hierarchical value structure rather than from a vanishing language
density.

For a nonnegative integer $k$ and a set $X$ of nonnegative integers, let
$kX$ denote the set of sums of exactly $k$ not necessarily distinct
members of $X$, with $0X=\{0\}$.  Put
\[
 F_k(X)=\bigcup_{j=0}^{k}jX.
\]
Thus $F_k(X)$ is the set of sums of at most $k$ elements of $X$.  If
$0\in X$, membership in $kX$ may be read as a representation by at most
$k$ positive elements of $X$: the zero entries are only padding and are
deleted when the representation is transferred back to positive primitive
Dyck summands.  This is not dilation notation: for a scalar $c$, we write
$c\cdot X=\{cx:x\in X\}$.

Lemma~\ref{lem:mod4} motivates the halved positive family
\begin{equation}\label{eq:P-def}
 \Pset=\{v/4:v\in\NpD,\ v\geq12\}.
\end{equation}
Then
\begin{equation}\label{eq:NpD-decomp}
 \NpD=\{0,2\}\cup4\cdot\Pset.
\end{equation}
The first values in $\Pset$ are
\[
 3,13,14,53,54,57,58,60,213,214,217,\ldots.
\]

\section{Two-coloured Motzkin block coding}\label{sec:base4}

Two-coloured Motzkin paths are classically related bijectively to Dyck
paths and ordered trees; see, for example, Deutsch and
Shapiro~\cite{DeutschShapiro}.  The construction below is a direct
left-to-right block map for primitive Dyck words.  Its role in this paper
is value-sensitive: the same pairs are simultaneously the base-$4$ digits
of the represented integer.

Binary words of even length may be read blockwise in base $4$ through
\[
 00\leftrightarrow0,\qquad
 01\leftrightarrow1,\qquad
 10\leftrightarrow2,\qquad
 11\leftrightarrow3.
\]
Define the uniform morphism
\[
 \phi:\{0,1,2,3\}^*\longrightarrow\{0,1\}^*
\]
by
\[
 \phi(0)=00,\quad \phi(1)=01,\quad
 \phi(2)=10,\quad \phi(3)=11.
\]
Assign weights
\[
 \sigma(3)=1,\qquad \sigma(0)=-1,\qquad
 \sigma(1)=\sigma(2)=0,
\]
and extend $\sigma$ additively to words.  A word over
$\{0,1,2,3\}$ is a \emph{two-coloured Motzkin word} if every prefix has
nonnegative weight and its total weight is zero.  Let $\mathsf{Motz}$
denote this language; it contains $\lambda$.

\begin{theorem}[Block-coding characterization]\label{thm:block-coding}
The language of primitive Dyck words satisfies the disjoint identity
\begin{equation}\label{eq:block-coding-language}
 \DP=\{10\}\cup\phi\bigl(3\mathsf{Motz}0\bigr).
\end{equation}
\end{theorem}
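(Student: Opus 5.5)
We prove the two inclusions by tracking the height path of a binary word at even positions. For $w=\phi(u)$ with $u=u_1\cdots u_m$ over $\{0,1,2,3\}$, the height of $w$ after the first $2j$ letters is exactly $2\sigma(u_1\cdots u_j)$, because the blocks $11,01,10,00$ change height by $+2,0,0,-2$. The height at an odd position $2j+1$ is the height at position $2j$ plus $\pm1$: it is $+1$ when $u_{j+1}\in\{2,3\}$ (first letter $1$) and $-1$ when $u_{j+1}\in\{0,1\}$. The whole proof rests on this even/odd height bookkeeping.

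For the inclusion $\supseteq$, the word $10$ is primitive. Let $u=3v0$ with $v\in\mathsf{Motz}$ and put $w=\phi(u)$. The total weight of $u$ is $1+0-1=0$, so $w$ is balanced. For $1\le j\le m-1$ the prefix $u_1\cdots u_j$ has weight $1+\sigma(\text{prefix of }v)\ge1$, so the even-position heights in the interior are at least $2$. Odd-position heights are then at least $2-1=1$, with the first step from height $0$ being $+1$ since $u_1=3$. The last block $00$ goes $2\to1\to0$. Hence $w$ is a Dyck word that touches zero only at its endpoints, so $w$ is primitive. Disjointness holds because every word in $\phi(3\mathsf{Motz}0)$ has length at least $4$.

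For the inclusion $\subseteq$, take $w\in\DP$ with $w\neq10$. Its length is even, $2m$ with $m\ge2$, so $w=\phi(u)$ for a unique $u$. Primitivity says every height at positions $1,\dots,2m-1$ is at least $1$. Even heights are even, so every interior even height is at least $2$, which gives $\sigma(u_1\cdots u_j)\ge1$ for $1\le j\le m-1$.
\begin{itemize}
\item From $j=1$ we get $\sigma(u_1)=1$, so $u_1=3$.
\item Lemma~\ref{lem:mod4} gives that $w$ ends in $00$, so $u_m=0$.
\end{itemize}
Write $u=3v0$. The prefixes of $v$ have weight $\sigma(u_1\cdots u_j)-1\ge0$, and the total weight of $v$ is $0-1+1=0$. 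Therefore $v\in\mathsf{Motz}$.

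The only delicate step is the parity observation that interior even heights, being positive and even, are at least $2$. This is what forces $u_1=3$ and makes the prefix condition on $v$ exact; the rest is routine bookkeeping.
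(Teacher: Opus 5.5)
Your proof is correct and follows essentially the same route as the paper. Both arguments use the identity that the height after $2j$ bits equals $2\sigma(u_1\cdots u_j)$, use parity to upgrade positive interior even heights to at least $2$, and peel off the forced initial $3$ and final $0$ to read off the Motzkin conditions on the middle word. The paper checks the odd positions block by block in the converse; your uniform argument is marginally cleaner, since every odd position after the first follows an interior even position of height at least $2$ and therefore has height at least $1$, whatever the block type.
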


\begin{proof}
The word $10$ is primitive.  Let $B\in\DP$ have length at least $4$, and
group its bits into pairs.  Write
\[
 B=\phi(q_1q_2\cdots q_L),
 \qquad q_i\in\{0,1,2,3\}.
\]
Let $h_B(t)$ be the binary height after $t$ bits.  Across a complete block,
\begin{equation}\label{eq:block-height}
 h_B(2j)-h_B(2j-2)=2\sigma(q_j).
\end{equation}
Primitivity forces the first block to be $11$ and the last block to be
$00$; otherwise the path would return to height zero at position $2$ or
at position $2L-2$.  Hence $q_1=3$ and $q_L=0$.

Put $u=q_2\cdots q_{L-1}$.  At each internal block boundary,
$h_B(2j)$ is a positive even integer.  Summing
\eqref{eq:block-height} therefore gives
\[
 \sum_{i=1}^{j}\sigma(q_i)\geq1
 \qquad(1\leq j\leq L-1).
\]
Subtracting $\sigma(q_1)=1$ shows that every prefix of $u$ has
nonnegative weight.  Since the total block weight is zero and the endpoint
weights are $1$ and $-1$, the total weight of $u$ is zero.  Thus
$u\in\mathsf{Motz}$ and $B=\phi(3u0)$.

Conversely, let $u=q_2\cdots q_m\in\mathsf{Motz}$ and set
$B=\phi(3u0)$.  At each nonterminal even block boundary the binary height
is
\[
 2\left(1+\sum_{i=2}^{j}\sigma(q_i)\right)\geq2,
\]
and the total height is zero.  It remains to inspect odd positions.  A
block $3$ or $2$ begins with $1$ and therefore increases the height.  A
block $1$ begins with $0$, but it starts at an even boundary of height at
least $2$, so the intermediate height is at least $1$.  If a block $0$
occurs inside $u$, then the Motzkin weight after reading that block must
remain nonnegative.  Since the block has weight $-1$, its preceding weight
is therefore at least $1$; hence the block starts at height at least $4$
and its first $0$ leaves positive height.  Finally, the appended terminal block
$0$ starts at height $2$, passes through height $1$, and ends at $0$.
Thus every proper prefix of $B$ has positive height, so $B\in\DP$.
\end{proof}

For a positive integer $n$, let $\operatorname{quad}(n)$ denote its
standard base-$4$ expansion.

\begin{corollary}[Base-$4$ value characterization]\label{cor:base4}
A positive integer $p$ lies in $\Pset$ if and only if there exists
$w\in\mathsf{Motz}$ such that
\begin{equation}\label{eq:base4-value-characterization}
 \operatorname{quad}(p)=3w.
\end{equation}
\end{corollary}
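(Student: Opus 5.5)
The plan is to translate Theorem~\ref{thm:block-coding} into values.  For this we need two elementary facts about $\phi$: positional values are preserved, and the word $\phi(3u0)$ has no leading zeros.  Everything then reduces to checking that the map $v\mapsto v/4$ corresponds to deleting the final base-$4$ digit $0$.

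The first step is the identity $\val{\phi(x)}{2}=\val{x}{4}$ for every word $x$ over $\{0,1,2,3\}$.  Since $\phi$ is uniform of length $2$ and $\val{\phi(d)}{2}=d$ for each digit $d$, induction on $|x|$ gives
\[
 \val{\phi(xd)}{2}=4\val{\phi(x)}{2}+d=4\val{x}{4}+d=\val{xd}{4}.
\]
Next we describe $\Pset$.  By \eqref{eq:P-def} and Lemma~\ref{lem:mod4}, $\Pset$ consists of the numbers $v/4$ with $v=\val{B}{2}$, $B\in\DP$ and $B\neq10$, because $10$ is the only primitive word whose value is below $12$ (the next value is $\val{1100}{2}=12$).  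By Theorem~\ref{thm:block-coding}, such $B$ are exactly the words $\phi(3u0)$ with $u\in\mathsf{Motz}$.  By the first step, $v=\val{3u0}{4}=4\val{3u}{4}$, so $v/4=\val{3u}{4}$.

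It remains to match this with standard expansions.  The word $3u$ starts with the nonzero digit $3$, so it is the standard base-$4$ expansion of its value, that is, $\operatorname{quad}(\val{3u}{4})=3u$.  For the forward direction, $p\in\Pset$ gives $p=\val{3u}{4}$ for some $u\in\mathsf{Motz}$, hence $\operatorname{quad}(p)=3u$ and we take $w=u$.  For the converse, suppose $\operatorname{quad}(p)=3w$ with $w\in\mathsf{Motz}$.  Then $B=\phi(3w0)\in\DP$ by Theorem~\ref{thm:block-coding}, its value is $4p\geq12$, and so $p\in\Pset$.

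No step here is a real obstacle.  The only points needing care are bookkeeping: excluding the word $10$ through the threshold $v\geq12$, and keeping leading zeros out of $\operatorname{quad}$.  The leading digit $3$ handles the latter automatically.
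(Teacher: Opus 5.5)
Your proposal is correct and is essentially the paper's proof. Both arguments apply Theorem~\ref{thm:block-coding}, identify the condition $v\geq12$ with discarding the word $10$, and pass from $4p$ to $p$ by deleting the terminal base-$4$ digit $0$. You spell out bookkeeping that the paper leaves implicit, namely the identity $\val{\phi(x)}{2}=\val{x}{4}$ and the fact that the leading digit $3$ rules out leading zeros in $\operatorname{quad}(p)$.
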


\begin{proof}
By definition, $p\in\Pset$ exactly when $4p$ is represented by a primitive
Dyck word of length at least $4$.  Multiplication by $4$ appends the
base-$4$ digit $0$.  Theorem~\ref{thm:block-coding} therefore says exactly
that the base-$4$ expansion of $4p$ is $3w0$ with
$w\in\mathsf{Motz}$, equivalently that
$\operatorname{quad}(p)=3w$.
\end{proof}
\begin{corollary}\label{cor:Pclosure}
If $p\in\Pset$, then $4p+1$ and $4p+2$ also belong to $\Pset$.
\end{corollary}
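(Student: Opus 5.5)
The plan is to reduce the claim to Corollary~\ref{cor:base4} and check that appending a weight-zero digit keeps the Motzkin conditions intact.

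Let $p\in\Pset$. By Corollary~\ref{cor:base4} there is a word $w\in\mathsf{Motz}$ with $\operatorname{quad}(p)=3w$. For $d\in\{1,2\}$ the integer $4p+d$ is positive. Since $0\le d<4$ and the leading digit $3$ is nonzero, its standard base-$4$ expansion is obtained by appending the digit $d$:
\[
 \operatorname{quad}(4p+d)=3wd .
\]
It therefore suffices to show that $wd\in\mathsf{Motz}$, after which Corollary~\ref{cor:base4}, applied in the reverse direction, gives $4p+d\in\Pset$.

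To check $wd\in\mathsf{Motz}$, first consider the prefixes of $wd$. Every prefix is either a prefix of $w$, which has nonnegative weight because $w\in\mathsf{Motz}$, or is $wd$ itself. Since $\sigma(1)=\sigma(2)=0$, the word $wd$ has weight $\sigma(w)+\sigma(d)=0+0=0$. So every prefix has nonnegative weight and the total weight is zero, which is exactly the definition of $\mathsf{Motz}$.

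There is no real obstacle here. The only point needing care is that the base-$4$ expansion of $4p+d$ really is the concatenation $3wd$, and this holds because the expansion of $p$ starts with the nonzero digit $3$.
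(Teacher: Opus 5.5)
Your proof is correct and is the same as the paper's. Corollary~\ref{cor:base4} gives $\operatorname{quad}(p)=3w$ with $w\in\mathsf{Motz}$, and appending a weight-zero digit $d\in\{1,2\}$ gives $\operatorname{quad}(4p+d)=3wd$ with $wd\in\mathsf{Motz}$, so the corollary applies in reverse; your explicit prefix check and your remark on the base-$4$ expansion simply spell out what the paper states in one line.
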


\begin{proof}
By Corollary~\ref{cor:base4}, there exists $w\in\mathsf{Motz}$ such that
\[
 \operatorname{quad}(p)=3w.
\]
Then
\[
 \operatorname{quad}(4p+1)=3w1,
 \qquad
 \operatorname{quad}(4p+2)=3w2.
\]
Since
\[
 \sigma(1)=\sigma(2)=0,
\]
appending either digit preserves membership in $\mathsf{Motz}$.  The claim follows from
Corollary~\ref{cor:base4}.
\end{proof}

We next determine the smallest and largest values in each generation.

\begin{lemma}[Extremal two-coloured Motzkin values]\label{lem:extremal}
For every $\ell\geq0$ and every two-coloured Motzkin word $w$ of length $\ell$,
\begin{equation}\label{eq:motzkin-extrema}
  \frac{4^{\ell}-1}{3}\leq[w]_4\leq4^{\ell}-2^{\ell}.
\end{equation}
Both bounds are attained.  The lower bound is attained by $1^{\ell}$, and the
upper bound is attained by
\[
  w_{\max,\ell}=
  \begin{cases}
    3^a0^a,&\ell=2a,\\
    3^a2\,0^a,&\ell=2a+1.
  \end{cases}
\]
Here $1^0$ is understood to be the empty word $\lambda$.
\end{lemma}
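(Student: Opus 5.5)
The plan is to handle the two bounds separately, each by a digitwise comparison argument. For the lower bound, note that $[1^\ell]_4=(4^\ell-1)/3$ and $1^\ell\in\mathsf{Motz}$ because every prefix has weight $0$. Now take a Motzkin word $w$ of length $\ell$ that is not $1^\ell$; we will show $[w]_4>[1^\ell]_4$. The first digit of $w$ cannot be $0$, since the prefix of length one would have weight $-1$. So either $w$ starts with a digit at least $2$, or it starts with $1$. In the first case $[w]_4\ge 2\cdot4^{\ell-1}>(4^\ell-1)/3$. In the second case, let $i$ be the first position with $q_i\ne1$. The prefix $1^{i-1}$ has weight zero, so $q_i\neq0$ and hence $q_i\ge2$. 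Comparing lexicographically at position $i$ then gives $[w]_4>[1^\ell]_4$. A cleaner way to say this: the first non-$1$ digit of a Motzkin word is never $0$, so $w\ge 1^\ell$ in the lexicographic order on words of equal length, and that order coincides with numerical order.

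For the upper bound, first check directly that $w_{\max,\ell}$ is Motzkin and has the stated value. For $\ell=2a$,
\[
[3^a0^a]_4=(4^a-1)4^a=4^{2a}-2^{2a}.
\]
For $\ell=2a+1$,
\[
[3^a20^a]_4=(4^a-1)4^{a+1}+2\cdot4^a=4^{2a+1}-2\cdot4^a=4^\ell-2^\ell.
\]
Next we show that $w_{\max,\ell}$ is lexicographically maximal among Motzkin words of length $\ell$. The key constraint is that after reading $j$ digits the weight $h_j$ must satisfy $h_j\le \ell-j$, because the remaining $\ell-j$ digits can decrease the weight by at most one each. Suppose $w$ agrees with $w_{\max,\ell}$ up to some position and then differs. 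In the rising part, $w_{\max}$ uses the digit $3$, so a differing $w$ must have a smaller digit there. In the middle position of the odd case, the current weight is $a$ with $\ell-j=a+1$. A $3$ there would give weight $a+1>\ell-(j+1)=a$, so the $3$ is forbidden and $2$ is the largest allowed digit. In the falling part, the weight is $h_j=\ell-j$ exactly, so every remaining digit is forced to be $0$ and $w$ cannot differ there at all. Hence any Motzkin word different from $w_{\max,\ell}$ is lexicographically smaller. Since the words have the same length, it is also numerically smaller, which gives the upper bound.

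The main obstacle is stating the forcing in the upper-bound argument cleanly, particularly the parity distinction at the middle digit. I would organize it as an induction on $j$ with the invariant $h_j\le\min(j,\ell-j)$. The largest feasible digit at step $j+1$ is $3$ if $h_j+1\le\ell-j-1$; otherwise it is $2$ if $h_j\le\ell-j-1$, and otherwise $0$. Applied to the prefix of $w_{\max,\ell}$, this reproduces exactly $3^a0^a$ or $3^a20^a$. The case $\ell=0$ is trivial, since both bounds equal $0=[\lambda]_4$.
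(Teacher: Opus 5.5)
Your proof is correct, but it follows a different route from the paper's.

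\textbf{The paper's route.} The paper argues by induction on $\ell$. For the lower bound, it deletes a leading $1$ or $2$ and applies the induction hypothesis, or bounds crudely by $3\cdot4^{\ell-1}$ when the first digit is $3$. For the upper bound, it treats a leading $1$ or $2$ separately. When the first digit is $3$, it uses the first-return decomposition $w=3u0v$ with $u,v\in\mathsf{Motz}$, applies the induction hypothesis to $u$ and $v$, and closes with the estimate $2^{2\ell-r}-2^{2\ell-2r}\geq2^{\ell}-2^{\ell-r}$.

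\textbf{Your route.} You observe that for words of equal length, base-$4$ numerical order coincides with lexicographic order. The problem therefore reduces to identifying the lexicographically extreme Motzkin words of length $\ell$. The only structural input is the prefix constraint $0\leq h_j\leq\ell-j$, which does two things:
\begin{enumerate}[(i)]
\item For the minimum, it forces the first non-$1$ digit to be nonzero, hence at least $2$.
\item For the maximum, it leaves $3$ as the top digit in the rising part. It excludes $3$ at the middle position in the odd case. Once $h_j=\ell-j$, it forces every remaining digit to be $0$.
\end{enumerate}

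\textbf{Comparison.} Your argument avoids the place-value algebra entirely. It also proves slightly more: every Motzkin word other than $1^{\ell}$ (respectively $w_{\max,\ell}$) is strictly larger (respectively smaller), so the extremizers are unique. The paper's induction does not state this, and it would need extra bookkeeping in the case analysis to obtain it. On the other hand, the paper's argument uses the recursive factorization $3u0v$ of Motzkin words, which mirrors the form $3\mathsf{Motz}0$ in the block-coding theorem. It is also the kind of argument that adapts to other weighted value bounds where the order is not simply lexicographic.
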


\begin{proof}
We use induction on $\ell$.  The assertion is immediate for $\ell=0$.
Write
\[
 w=c_0c_1\cdots c_{\ell-1},
 \qquad
 c_j\in\{0,1,2,3\}.
\]

\emph{Lower bound.}
Since $\sigma(0)=-1$, the first digit cannot be $0$.  If $c_0\in\{1,2\}$, then $\sigma(c_0)=0$, so deleting the first
digit does not change any subsequent prefix weight or the total weight.
Hence the suffix $w'=c_1\cdots c_{\ell-1}$ is again a two-coloured
Motzkin word.  By the induction hypothesis for words of length $\ell-1$,
\[
  [w]_4\geq4^{\ell-1}+[w']_4
  \geq4^{\ell-1}+\frac{4^{\ell-1}-1}{3}
  =\frac{4^{\ell}-1}{3}.
\]
If $c_0=3$, then
\[
  [w]_4
  \geq[3\underbrace{00\cdots0}_{\ell-1\text{ digits}}]_4
  =3\cdot4^{\ell-1}
  \geq\frac{4^{\ell}-1}{3}.
\]
Equality is attained by $w=1^{\ell}$.

\emph{Upper bound.}
If $c_0\in\{1,2\}$, then, as above, $w'$ is again a two-coloured
Motzkin word.  By the induction hypothesis for words of length $\ell-1$,
\begin{align*}
  [w]_4
  &=[c_0c_1\cdots c_{\ell-1}]_4\\
  &\leq[2c_1\cdots c_{\ell-1}]_4\\
  &\leq2\cdot4^{\ell-1}+\bigl(4^{\ell-1}-2^{\ell-1}\bigr)\\
  &=3\cdot4^{\ell-1}-2^{\ell-1}\\
  &\leq4^{\ell}-2^{\ell}.
\end{align*}

Now suppose $c_0=3$, and let $r$ be the length of the shortest nonempty
prefix of $w$ having total weight $0$.  This prefix has the form $3u0$,
where
\[
 u=c_1\cdots c_{r-2},
\]
and let
\[
 v=c_r\cdots c_{\ell-1}
\]
be the remaining suffix.  Then
\[
 2\leq r\leq\ell,
 \qquad
 |u|=r-2,
 \qquad
 |v|=\ell-r,
\]
and $u,v\in\mathsf{Motz}$.  Thus $w=3u0v$.  By place value and the induction hypothesis,
\begin{align*}
  [w]_4
  &=3\cdot4^{\ell-1}+[u]_4\,4^{\ell-r+1}+[v]_4\\
  &\leq3\cdot4^{\ell-1}
    +\bigl(4^{r-2}-2^{r-2}\bigr)4^{\ell-r+1}
    +4^{\ell-r}-2^{\ell-r}\\
  &=4^{\ell}-2^{2\ell-r}+2^{2\ell-2r}-2^{\ell-r}\\
  &=4^{\ell}-\bigl(2^{2\ell-r}-2^{2\ell-2r}\bigr)-2^{\ell-r}.
\end{align*}
It remains to verify
\[
  2^{2\ell-r}-2^{2\ell-2r}
  \geq2^{\ell}-2^{\ell-r}.
\]
After factoring, this is
\[
  2^{2\ell-2r}(2^r-1)\geq2^{\ell-r}(2^r-1),
\]
which follows from $r\leq\ell$.  This proves the upper bound.

Finally, the displayed words $w_{\max,\ell}$ are two-coloured Motzkin words.
A geometric-series calculation gives
\[
  [w_{\max,\ell}]_4=4^{\ell}-2^{\ell},
\]
so the upper bound is attained.
\end{proof}

An element of $\Pset$ belongs to \emph{generation $d$} if its base-$4$
expansion has exactly $d$ digits.  By Corollary~\ref{cor:base4}, a
generation-$d$ element $p$ satisfies
\[
 \operatorname{quad}(p)=3w
\]
for some $w\in\mathsf{Motz}$ with $|w|=d-1$.  Therefore
\begin{equation}\label{eq:generation-value}
 p=[3w]_4=3\cdot4^{d-1}+[w]_4.
\end{equation}
The following corollary is now immediate from Lemma~\ref{lem:extremal}.

\begin{corollary}[Generation bounds]\label{cor:gen}
For every $d\geq1$, every generation-$d$ element of $\Pset$ lies in
\begin{equation}\label{eq:generation-interval}
  [g_d,U_d]
  =\left[
  3\cdot4^{d-1}+\frac{4^{d-1}-1}{3},
  4^d-2^{d-1}
  \right],
\end{equation}
and both endpoints are attained.  In particular,
\begin{enumerate}[(a)]
\item there is no element of $\Pset$ in $(U_d,3\cdot4^d)$;
\item for every $k\geq0$,
\begin{equation}\label{eq:three-g}
  3g_{k+1}=10\cdot4^k-1.
\end{equation}
\end{enumerate}
\end{corollary}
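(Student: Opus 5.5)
The plan is to substitute $\ell=d-1$ into Lemma~\ref{lem:extremal} and translate through the generation formula \eqref{eq:generation-value}. A generation-$d$ element has the form $p=3\cdot4^{d-1}+[w]_4$ with $w\in\mathsf{Motz}$ and $|w|=d-1$. Adding $3\cdot4^{d-1}$ to the bounds \eqref{eq:motzkin-extrema} gives
\[
 3\cdot4^{d-1}+\frac{4^{d-1}-1}{3}\leq p\leq 3\cdot4^{d-1}+4^{d-1}-2^{d-1}=4^d-2^{d-1}.
\]
This is exactly \eqref{eq:generation-interval}.

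For attainment, I will use the converse direction of Corollary~\ref{cor:base4}. Since $1^{d-1}$ and $w_{\max,d-1}$ lie in $\mathsf{Motz}$, the integers $[31^{d-1}]_4$ and $[3w_{\max,d-1}]_4$ belong to $\Pset$. Each has a base-$4$ expansion beginning with $3$ and containing exactly $d$ digits, so each is a generation-$d$ element. Their values are $g_d$ and $U_d$ respectively.

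For (a), the key observation is that every element of $\Pset$ has a leading base-$4$ digit of $3$. Hence an element of generation $e$ lies in $[3\cdot4^{e-1},4^e)$. Now take an element $p$ of generation $e$ and compare it with the interval $(U_d,3\cdot4^d)$.
\begin{itemize}
\item If $e\leq d$, then $p\leq U_e\leq U_d$. Here I will check that $U_e$ is increasing in $e$, which holds since $U_{e+1}-U_e=3\cdot4^e-2^{e-1}>0$.
\item If $e\geq d+1$, then $p\geq3\cdot4^{e-1}\geq3\cdot4^d$.
\end{itemize}
In either case $p$ misses the open interval.

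For (b), set $d=k+1$ and compute directly:
\[
 3g_{k+1}=9\cdot4^k+4^k-1=10\cdot4^k-1.
\]

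There is no real obstacle in this argument. The only point needing care is in (a), where one must use the leading digit $3$ of later generations rather than the interval bound $g_e$, although the latter would also suffice.
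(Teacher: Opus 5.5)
Your proof is correct and follows the paper's argument. You substitute $\ell=d-1$ into Lemma~\ref{lem:extremal} via \eqref{eq:generation-value}, read off attainment from $1^{d-1}$ and $w_{\max,d-1}$, and compute $3g_{k+1}$ directly; your treatment of (a) is slightly more complete than the paper's one-line remark that $g_{d+1}>3\cdot4^d$, because you also explicitly rule out earlier generations using the monotonicity of $U_e$.
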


\begin{proof}
Let $p$ be a generation-$d$ element.  Applying
\eqref{eq:motzkin-extrema} with $\ell=d-1$ gives
\[
 \frac{4^{d-1}-1}{3}
 \leq[w]_4
 \leq4^{d-1}-2^{d-1}.
\]
Substituting these bounds into \eqref{eq:generation-value} yields
\eqref{eq:generation-interval}.  The lower
endpoint is attained by $w=1^{d-1}$, and the upper endpoint is attained by
$w=w_{\max,d-1}$.  Part~(a) follows because the next generation begins at
$g_{d+1}>3\cdot4^d$.  Finally,
\[
  3g_{k+1}
  =3\left(3\cdot4^k+\frac{4^k-1}{3}\right)
  =10\cdot4^k-1.
\]
\end{proof}

\section{Regular underapproximation, digit lifting, and finite certificates}
\label{sec:lifting}

We first isolate a digit-lifting mechanism that is independent of Dyck
words.

\begin{theorem}[Interval digit lifting]\label{thm:general-digit-lifting}
Let $b\geq2$, let $\alpha\leq\beta$ be integers, and let
$A\subseteq\mathbb Z$ satisfy
\begin{equation}\label{eq:general-digit-closure}
 b\cdot A+[\alpha,\beta]\subseteq A.
\end{equation}
For every $k\geq1$ and every $m\in kA$,
\begin{equation}\label{eq:general-digit-lifting}
 [bm+k\alpha,bm+k\beta]\subseteq kA.
\end{equation}
\end{theorem}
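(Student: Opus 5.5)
The plan is a direct argument: write $m=a_1+\cdots+a_k$ with $a_i\in A$, then distribute a target offset among the $k$ summands. Fix an integer $t$ with $k\alpha\le t\le k\beta$; the aim is to show $bm+t\in kA$. It suffices to split $t=t_1+\cdots+t_k$ with each $t_i\in[\alpha,\beta]$. Given such a split, the closure hypothesis \eqref{eq:general-digit-closure} gives $ba_i+t_i\in A$ for every $i$, and summing yields
\[
 \sum_{i=1}^{k}(ba_i+t_i)=bm+t\in kA .
\]

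The only step needing any care is the integer decomposition of $t$. Write $t=k\alpha+s$ with $0\le s\le k(\beta-\alpha)$, and put $\delta=\beta-\alpha\ge0$. Division with remainder gives $s=q\delta+r$ with $0\le r<\delta$ when $\delta>0$, and then $q\le k$. Take $q$ of the offsets equal to $\beta$. If $q<k$, take one further offset equal to $\alpha+r$, and set all remaining offsets equal to $\alpha$. If $q=k$, then $r=0$ and every offset is $\beta$. When $\delta=0$ we have $s=0$, so all offsets equal $\alpha$. In each case the $t_i$ are integers in $[\alpha,\beta]$ with sum $k\alpha+s=t$.

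A smoother alternative is induction on $k$. The interval $[k\alpha,k\beta]$ equals the Minkowski sum $[(k-1)\alpha,(k-1)\beta]+[\alpha,\beta]$ of integer intervals, which is immediate because consecutive integers are consecutive. I would present this version: the case $k=1$ is the hypothesis itself, and the inductive step peels off one summand $a_k$.

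I expect no real obstacle here. The one point to state explicitly is that $kA$ means sums of exactly $k$ elements, possibly repeated, so reusing the same $a_i$ is harmless. No sign assumption on $A$ or on $\alpha$ is needed.
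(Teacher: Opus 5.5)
Your proposal is correct and follows essentially the same route as the paper: write $m=a_1+\cdots+a_k$, split the offset into $k$ integers in $[\alpha,\beta]$, apply the closure hypothesis termwise, and sum. The only difference is cosmetic: the paper obtains the split by starting from all coordinates equal to $\alpha$ and distributing unit increments without exceeding $\beta$, whereas you use division with remainder (or the equivalent induction on $k$), and either works.
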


\begin{proof}
Write $m=a_1+\cdots+a_k$ with $a_i\in A$.  Every integer
$r\in[k\alpha,k\beta]$ is a sum
$r=\eta_1+\cdots+\eta_k$ with
$\eta_i\in[\alpha,\beta]$: start with all $\eta_i=\alpha$ and distribute
$r-k\alpha$ unit increments among the $k$ coordinates, never exceeding
$\beta$.  By \eqref{eq:general-digit-closure},
$ba_i+\eta_i\in A$, and hence
\[
 bm+r=\sum_{i=1}^{k}(ba_i+\eta_i)\in kA.
\]
\end{proof}

\begin{corollary}[Base-$4$ digit lifting]\label{lem:digit-lifting}
Let $k\geq1$, and let $A\subseteq\mathbb Z_{>0}$ satisfy
\begin{equation}\label{eq:digit-closure}
 4\cdot A+\{1,2\}\subseteq A.
\end{equation}
If $m\in kA$, then
\begin{equation}\label{eq:digit-lifting}
 [4m+k,4m+2k]\subseteq kA.
\end{equation}
\end{corollary}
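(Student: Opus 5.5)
This is a direct specialization of Theorem~\ref{thm:general-digit-lifting}, so we will simply instantiate that theorem rather than reprove it. We take $b=4$, $\alpha=1$ and $\beta=2$. Then the interval $[\alpha,\beta]$ consists of the integers $\{1,2\}$, and the closure hypothesis \eqref{eq:digit-closure}, namely $4\cdot A+\{1,2\}\subseteq A$, is literally the hypothesis \eqref{eq:general-digit-closure} of the general theorem. The extra assumption $A\subseteq\mathbb Z_{>0}$ is stronger than the requirement $A\subseteq\mathbb Z$ there, so it does no harm.

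With these choices, for $k\geq1$ and $m\in kA$ the conclusion \eqref{eq:general-digit-lifting} reads
\[
 [4m+k\cdot1,\;4m+k\cdot2]\subseteq kA ,
\]
which is exactly \eqref{eq:digit-lifting}. For the reader's benefit we can also recall the mechanism in this concrete form. Write $m=a_1+\cdots+a_k$ with $a_i\in A$. Any $r\in[k,2k]$ can be split as $r=\eta_1+\cdots+\eta_k$ with each $\eta_i\in\{1,2\}$: take $r-k$ of the $\eta_i$ equal to $2$ and the rest equal to $1$. Then each $4a_i+\eta_i$ lies in $A$ by \eqref{eq:digit-closure}, and summing gives $4m+r\in kA$.

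There is no genuine obstacle here; the only point worth checking is that the interval $[\alpha,\beta]$ in Theorem~\ref{thm:general-digit-lifting} is an interval of integers, so that it coincides with $\{1,2\}$. It may also help to remark how the corollary will be applied. By Corollary~\ref{cor:Pclosure}, the family $\Pset$ satisfies \eqref{eq:digit-closure}, so it may serve as $A$.
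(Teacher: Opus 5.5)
Your proposal is correct and is the same as the paper's proof, which consists of applying Theorem~\ref{thm:general-digit-lifting} with $b=4$, $\alpha=1$, $\beta=2$. Your concrete recap of the digit-splitting mechanism repeats that theorem's proof and is not needed for the corollary.
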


\begin{proof}
Apply Theorem~\ref{thm:general-digit-lifting} with
$b=4$, $\alpha=1$, and $\beta=2$.
\end{proof}

\begin{corollary}[Propagation from a finite interval to a tail]
\label{cor:tail-propagation}
Let $k\geq3$, and let $A\subseteq\mathbb Z_{>0}$ satisfy
\eqref{eq:digit-closure}.  Suppose that, for some integer $M\geq1$,
\begin{equation}\label{eq:tail-seed}
 [M,4M+k-1]\subseteq kA.
\end{equation}
Then
\[
 [M,\infty)\subseteq kA.
\]
\end{corollary}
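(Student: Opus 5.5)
We argue by strong induction on the target, using Corollary~\ref{lem:digit-lifting} to reach each $N>4M+k-1$ from a smaller target that is already known to lie in $kA$. The claim is that every integer $N\geq M$ lies in $kA$. For $N\leq 4M+k-1$ this is exactly the hypothesis \eqref{eq:tail-seed}. So fix $N\geq 4M+k$ and assume every integer in $[M,N-1]$ belongs to $kA$.

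The key step is to find an $m$ with $M\leq m<N$ and $N\in[4m+k,4m+2k]$. Then $m\in kA$ by the induction hypothesis, and Corollary~\ref{lem:digit-lifting} gives $N\in kA$. The admissible $m$ are exactly those with
\[
 \frac{N-2k}{4}\leq m\leq\frac{N-k}{4}.
\]
This interval has length $k/4$, so it need not contain an integer unless $k\geq3$. Indeed, the integers $\lceil (N-2k)/4\rceil$ and $\lfloor (N-k)/4\rfloor$ satisfy the right inequality exactly when the window $\{N-2k,\dots,N-k\}$ contains a multiple of $4$. That window consists of $k+1\geq4$ consecutive integers, so it always contains one. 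We take $m=\lfloor (N-k)/4\rfloor$. Then $4m\geq N-k-3\geq N-2k$, where the last step uses $k\geq3$; this is the only place the hypothesis $k\geq3$ enters.

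It remains to check the bounds on $m$. From $N\geq 4M+k$ we get $(N-k)/4\geq M$, and since $M$ is an integer, $m\geq M$. Also $m\leq (N-k)/4<N$, because $N>0$. Hence $m\in[M,N-1]\subseteq kA$, and the induction closes.

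The only delicate point is the arithmetic in the key step. One must confirm that the choice $m=\lfloor (N-k)/4\rfloor$ lands in both the lifting window and the inductively covered range $[M,N-1]$, and that the boundary case $N=4M+k$ is handled. That case gives $m=M$, which is covered by the seed interval.
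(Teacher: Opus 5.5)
Your proof is correct and follows essentially the same route as the paper: strong induction with the seed interval as the base case, then one application of Corollary~\ref{lem:digit-lifting}. Your choice $m=\lfloor (N-k)/4\rfloor$ is exactly the paper's $m=(n-\rho)/4$, where $\rho\in\{k,k+1,k+2,k+3\}$ is the representative of $n$ modulo $4$, and in both arguments the hypothesis $k\geq3$ enters only through $k+3\leq 2k$.
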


\begin{proof}
We use strong induction on $n\geq M$.  The seed interval handles
$M\leq n\leq4M+k-1$.  Let $n\geq4M+k$, and choose the unique
\[
 \rho\in\{k,k+1,k+2,k+3\}
\]
with $\rho\equiv n\pmod4$.  Since $k\geq3$,
$k\leq\rho\leq k+3\leq2k$.  Put $m=(n-\rho)/4$.  Then $m$ is an
integer and
\[
 m\geq\frac{4M+k-(k+3)}4=M-\frac34,
\]
so $m\geq M$.  Also $m<n$.  The induction hypothesis gives $m\in kA$,
and Corollary~\ref{lem:digit-lifting} gives
\[
 n=4m+\rho\in[4m+k,4m+2k]\subseteq kA.
\]
\end{proof}

\begin{corollary}[Constructive tail representations]
\label{cor:constructive-tail}
Under the hypotheses of Corollary~\ref{cor:tail-propagation}, suppose that
one stores a representation
\[
 n=a_1+\cdots+a_k,\qquad a_i\in A,
\]
for every $n\in[M,4M+k-1]$.  Then a $k$-term representation of every
$n\geq M$ can be constructed recursively.  The construction uses
$O(\log(n/M))$ recursive stages and $O(k\log(n/M))$ applications of the
digit maps $a\mapsto4a+1$ and $a\mapsto4a+2$.
\end{corollary}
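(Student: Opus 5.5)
The plan is to turn the strong induction in the proof of Corollary~\ref{cor:tail-propagation} into an explicit recursive procedure and then count its work. Fix $n\geq M$. If $n\leq4M+k-1$, the procedure returns the stored seed representation. Otherwise it chooses the unique $\rho\in\{k,k+1,k+2,k+3\}$ with $\rho\equiv n\pmod4$, sets $m=(n-\rho)/4$, and calls itself on $m$ to obtain $m=a_1+\cdots+a_k$ with $a_i\in A$. As shown in that proof, $M\leq m<n$, so the call is legitimate.

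From the representation of $m$, the procedure builds one for $n$ exactly as in the proof of Theorem~\ref{thm:general-digit-lifting}. Write $\rho=\eta_1+\cdots+\eta_k$ with each $\eta_i\in\{1,2\}$: take $\rho-k$ of the $\eta_i$ equal to $2$ and the rest equal to $1$. This is possible because $0\leq\rho-k\leq3\leq k$, using $k\geq3$. It then outputs
\[
 n=\sum_{i=1}^{k}(4a_i+\eta_i),
\]
and each summand lies in $A$ by \eqref{eq:digit-closure}. Correctness follows by the same strong induction as in Corollary~\ref{cor:tail-propagation}. Each stage applies exactly $k$ digit maps $a\mapsto4a+1$ or $a\mapsto4a+2$, plus $O(k)$ bookkeeping.

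The main point is bounding the number of stages. Let $n_0=n$ and $n_{j+1}=(n_j-\rho_j)/4$ for as long as $n_j>4M+k-1$. Since $\rho_j\geq k\geq0$, we have $n_{j+1}\leq n_j/4$, so $n_j\leq n/4^j$. If stage $j$ is still recursive, then $n_j>4M+k-1\geq4M$, hence $n/4^j>4M$ and $j<\log_4(n/M)-1$. The number of recursive stages $s$ therefore satisfies $s\leq\max\{0,\lceil\log_4(n/M)\rceil\}$, which is $O(\log(n/M))$. Including the final seed lookup, there are at most $O(1+\log(n/M))$ stages; reading $O(\log(n/M))$ with the usual convention covers the bounded case $n\leq4M+k-1$. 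Multiplying by the $k$ digit-map applications per stage gives the total of $O(k\log(n/M))$ applications.

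I expect no real obstacle. The only care needed is the stage count: the decrement by $\rho$ only helps, and the convention for $n$ close to $M$ should be stated.
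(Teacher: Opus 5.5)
Your proposal is correct and follows the paper's proof essentially step for step. It uses the same recursion on $m=(n-\rho)/4$ with $\rho\in\{k,\dots,k+3\}$, the same choice of $\rho-k$ digits equal to $2$, and the same observation that the argument shrinks by a factor of at least $4$ per nonseed stage, giving about $\log_4(n/M)$ stages with $k$ digit updates each. Your explicit handling of the bounded case $n\le 4M+k-1$ matches the paper's bound of $1+\lceil\log_4(n/M)\rceil$ stages.
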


\begin{proof}
For $n$ in the seed interval, return the stored representation.  Otherwise
choose $\rho$ and $m=(n-\rho)/4$ as in the proof of
Corollary~\ref{cor:tail-propagation}, and recursively write
$m=a_1+\cdots+a_k$.  Since $k\leq\rho\leq2k$, choose digits
$\eta_i\in\{1,2\}$ whose sum is $\rho$; explicitly, take
$\rho-k$ of the digits equal to $2$ and the remaining digits equal to $1$.
Then
\[
 n=\sum_{i=1}^k(4a_i+\eta_i),
\]
and every transformed summand belongs to $A$ by
\eqref{eq:digit-closure}.  At each nonseed stage the recursive argument is
strictly less than one quarter of the current target.  Hence the recursion
reaches the seed interval after at most
$1+\lceil\log_4(n/M)\rceil$ stages, and each stage performs $k$ digit
updates.
\end{proof}
\subsection{A regular underapproximation for multiples of four}

Consider the regular language
\[
 \mathcal R=11(01\mid10)^*00.
\]
By Theorem~\ref{thm:block-coding}, one has
$\mathcal R\subseteq\DP$: after block decoding, its words are exactly
$\phi(3\{1,2\}^*0)$, and every word over $\{1,2\}$ is a two-coloured
Motzkin word.  Thus $\mathcal R$ is an explicit regular
underapproximation of the nonregular language $\DP$.

Define the corresponding regular base-$4$ value family
\[
 \mathcal E=\{0\}\cup
 \bigl\{[3\varepsilon_1\cdots\varepsilon_j]_4:
        j\geq0,\ \varepsilon_i\in\{1,2\}\bigr\},
\]
and put $\mathcal E^+=\mathcal E\setminus\{0\}$.  When $j=0$, the
suffix is the empty word, and the corresponding element is $[3]_4=3$.
Every word $\varepsilon_1\cdots\varepsilon_j$ belongs to
$\mathsf{Motz}$.  Hence Corollary~\ref{cor:base4} gives
\[
 \mathcal E^+\subseteq\Pset.
\]
More precisely, $\operatorname{val}_2(\mathcal R)=4\cdot\mathcal E^+$.  It follows from \eqref{eq:NpD-decomp} that
\begin{equation}\label{eq:4E}
 4\cdot\mathcal E\subseteq\NpD.
\end{equation}

If $e\in\mathcal E^+$, then its base-$4$ expansion has the form
\[
 e=[3\varepsilon_1\cdots\varepsilon_j]_4,
 \qquad \varepsilon_i\in\{1,2\}.
\]
Consequently,
\[
 4e+1=[3\varepsilon_1\cdots\varepsilon_j1]_4,
 \qquad
 4e+2=[3\varepsilon_1\cdots\varepsilon_j2]_4.
\]
Both integers again belong to $\mathcal E^+$.  Therefore
\begin{equation}\label{eq:Eclosure}
 4\cdot\mathcal E^++\{1,2\}\subseteq\mathcal E^+.
\end{equation}

We use only the following finite inclusions as computational certificates.

\begin{lemma}[Finite certificate for the regular family]
\label{lem:Ecertificate}
Put
\[
\begin{aligned}
 A_0&=\{0,3,13,14\},
 &B&=\{53,54,57,58\},\\
 A&=\{3,13,14,53,54,57,58\},
 &C&=\{213,214,217,218,229,230,233,234\}.
\end{aligned}
\]
Then $A_0\subseteq\mathcal E$ and $A,B,C\subseteq\mathcal E^+$.
Moreover,
\begin{equation}\label{eq:E-small-certificate}
\begin{array}{c|c}
\text{sumset}&\text{contained interval}\\ \hline
6A_0 &[25,62]\\
B+5A_0 &[56,128]\\
2B+4A_0 &[106,172]\\
3B+3A_0 &[159,216]\\
4B+2A_0 &[212,260]
\end{array}
\end{equation}
and
\begin{equation}\label{eq:E-tail-certificate}
\begin{array}{c|c}
\text{sumset}&\text{contained interval}\\ \hline
6A &[218,260]\\
C+5A &[258,524]\\
2C+4A &[446,700]\\
3C+3A &[648,876]\\
4C+2A &[858,1052]
\end{array}
\end{equation}
hold.
\end{lemma}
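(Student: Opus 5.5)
The plan has two parts: check that each listed integer has the required base-$4$ shape, then verify the ten sumset inclusions by a structured finite computation.

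\emph{Membership.} I would write out the base-$4$ expansions directly:
\[
3=[3]_4,\quad 13=[31]_4,\quad 14=[32]_4,\quad 53=[311]_4,\quad 54=[312]_4,\quad 57=[321]_4,\quad 58=[322]_4,
\]
and $C$ consists of $[3\varepsilon_1\varepsilon_2\varepsilon_3]_4$ for $\varepsilon_i\in\{1,2\}$, namely $213,214,217,218,229,230,233,234$. Every expansion has the form $3\varepsilon_1\cdots\varepsilon_j$ with $\varepsilon_i\in\{1,2\}$, so $A,B,C\subseteq\mathcal E^+$ by the definition of $\mathcal E$. Since $0\in\mathcal E$, also $A_0\subseteq\mathcal E$. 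Alternatively, one can start from $3\in\mathcal E^+$ and apply \eqref{eq:Eclosure} repeatedly.

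\emph{Sumset inclusions.} For each row of \eqref{eq:E-small-certificate} and \eqref{eq:E-tail-certificate}, I would parametrize a sum by how many summands are taken from each "layer" and write each layer as a base plus a digit interval. Concretely, $13,14=12+\{1,2\}$, $53,54,57,58=52+\{1,2,5,6\}$, and $C=212+\{1,2,5,6,17,18,21,22\}$. In $6A_0$, using $i$ copies of $3$, $j$ elements of $\{13,14\}$ and $6-i-j$ zeros gives exactly the interval
\[
[3i+13j,\;3i+14j],
\]
because the $\{1,2\}$ digits fill every integer in between, as in the proof of Theorem~\ref{thm:general-digit-lifting}. Taking the union over $(i,j)$ and checking that consecutive intervals overlap or abut should produce $[25,62]$. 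For the rows involving $B$ or $C$, each choice of multiset from $B$ or $C$ contributes a fixed offset, and the remaining copies of $A_0$ or $A$ contribute a union of intervals computed in the same way. The claim then reduces to checking that a finite union of integer intervals covers the stated range.

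I expect the main obstacle to be this coverage check. The sets $\{1,2,5,6\}$ and the larger digit set of $C$ are not intervals, so the unions contain small gaps that must be filled by some other layer choice. These gaps are most delicate near the endpoints, for example the lower end $258$ of $C+5A$ and the upper end $1052$ of $4C+2A$. Rather than doing the case analysis by hand, I would rely on exhaustive enumeration with exact integer arithmetic. Each sumset $jX+(6-j)Y$ has at most a few thousand multisets, so the verification computes all of its elements and confirms that every integer in the stated interval occurs. This is exactly what the supplementary programs described in Section~\ref{sec:verification} do. In the write-up I would include the interval decomposition for $6A_0$ as an illustration and state that the remaining rows were verified exhaustively by machine.
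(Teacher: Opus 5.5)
Your proposal is correct and takes the same approach as the paper, which also treats every row of both certificates as an exhaustive finite sumset computation done by the supplementary script; membership in $\mathcal E$ and $\mathcal E^+$ is read off from the base-$4$ expansions $[3\varepsilon_1\cdots\varepsilon_j]_4$ exactly as you do. Your hand decomposition of $6A_0$ into the intervals $[3i+13j,\,3i+14j]$ is a correct illustration, yielding exactly $[25,62]$ with $24$ and $63$ missing, but it is an optional supplement rather than a different route.
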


\begin{proof}
Each assertion is an exhaustive finite computation of a sumset.  For a
finite set $X$, start with $S_0(X)=\{0\}$ and compute recursively
\[
 S_{i+1}(X)=S_i(X)+X.
\]
The supplementary verification script checks that every integer in each
displayed interval belongs to the corresponding sumset.
\end{proof}

\begin{proposition}\label{prop:E}
Every integer $n\geq25$ belongs to $6\mathcal E$.
\end{proposition}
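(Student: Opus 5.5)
The plan is to combine the two finite certificates of Lemma~\ref{lem:Ecertificate} with the tail-propagation principle of Corollary~\ref{cor:tail-propagation}. I split the range $n\geq25$ into a finite initial segment, handled directly by the first table, and an infinite tail, handled by lifting the second table.

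\emph{Initial segment.} Since $A_0\subseteq\mathcal E$ and $B\subseteq\mathcal E^+\subseteq\mathcal E$, each sumset $jB+(6-j)A_0$ with $0\leq j\leq4$ is a set of sums of exactly six elements of $\mathcal E$. Each such sumset is therefore contained in $6\mathcal E$. The five intervals in \eqref{eq:E-small-certificate} overlap consecutively:
\[
 62\geq56-1,\quad 128\geq106-1,\quad 172\geq159-1,\quad 216\geq212-1 .
\]
So their union is exactly $[25,260]$, which gives $[25,260]\subseteq6\mathcal E$.

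\emph{Tail.} I apply Corollary~\ref{cor:tail-propagation} with $k=6\geq3$ and the set $\mathcal E^+$ in the role of $A$. This set is a subset of $\mathbb Z_{>0}$ and satisfies the digit closure \eqref{eq:digit-closure} by \eqref{eq:Eclosure}. Since $A,C\subseteq\mathcal E^+$, every sumset $jC+(6-j)A$ in \eqref{eq:E-tail-certificate} is contained in $6\mathcal E^+$. The five intervals again chain together:
\[
 260\geq258-1,\quad 524\geq446-1,\quad 700\geq648-1,\quad 876\geq858-1 .
\]
Hence $[218,1052]\subseteq6\mathcal E^+$.

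Taking $M=218$ gives $4M+k-1=877\leq1052$, so the seed hypothesis \eqref{eq:tail-seed} holds. The corollary then yields $[218,\infty)\subseteq6\mathcal E^+\subseteq6\mathcal E$. Together with the initial segment $[25,260]$, this covers every $n\geq25$.

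There is no serious obstacle in this argument. The only points needing care are to confirm that the intervals in each table really chain without gaps, and to choose $M$ so that the seed interval $[M,4M+5]$ fits inside the certified range $[218,1052]$. Any $M$ with $218\leq M\leq261$ works. The reason the tail must use the positive family $\mathcal E^+$, rather than $\mathcal E$ with $0$ allowed as padding, is that Corollary~\ref{cor:tail-propagation} requires the closure \eqref{eq:digit-closure}. That closure fails at $0$, since $4\cdot0+1=1\notin\mathcal E$. This is why the second table is stated with $A,C\subseteq\mathcal E^+$.
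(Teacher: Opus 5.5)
Your proof is correct and is essentially the paper's argument: the first certificate covers the initial segment, the second gives the seed $[218,877]\subseteq6\mathcal E^+$, and Corollary~\ref{cor:tail-propagation} with $k=6$, $M=218$ is applied to the digitally closed set $\mathcal E^+$ rather than $\mathcal E$. The only difference is that you record the full unions $[25,260]$ and $[218,1052]$, where the paper truncates them to $[25,217]$ and $[218,877]$.
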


\begin{proof}
The five intervals in \eqref{eq:E-small-certificate} overlap and give
\[
 [25,217]\subseteq6\mathcal E.
\]
Every sumset occurring in \eqref{eq:E-tail-certificate} is formed only
from the sets $A,C\subseteq\mathcal E^+$; no zero summand is used.
Hence the five intervals in that certificate overlap to give
\[
 [218,877]\subseteq6\mathcal E^+.
\]
Combining the two finite ranges, we obtain
\begin{equation}\label{eq:E-small-cover}
 [25,877]\subseteq6\mathcal E.
\end{equation}

We apply Corollary~\ref{cor:tail-propagation} to
$A=\mathcal E^+$, rather than to $A=\mathcal E$.  The set
$\mathcal E$ contains $0$ and therefore does not satisfy
\eqref{eq:digit-closure}, since that condition would force
$1,2\in\mathcal E$.  By contrast, $\mathcal E^+$ satisfies
\eqref{eq:digit-closure} by \eqref{eq:Eclosure}.  Since
\[
 877=4\cdot218+6-1,
\]
the inclusion
\[
 [218,877]\subseteq6\mathcal E^+
\]
is exactly the seed condition with $k=6$ and $M=218$.
Corollary~\ref{cor:tail-propagation} therefore gives
\[
 [218,\infty)\subseteq6\mathcal E^+\subseteq6\mathcal E.
\]
Combining this with
\[
 [25,217]\subseteq6\mathcal E
\]
proves
\[
 [25,\infty)\subseteq6\mathcal E.
\]
\end{proof}

It follows from Proposition~\ref{prop:E} and \eqref{eq:4E} that every
multiple of $4$ at least $100$ is a sum of at most six primitive Dyck
numbers.

\subsection{Five-summand covering by the full context-free family}

The regular family $\mathcal E$ supplies the covering used for integers
congruent to $0$ modulo $4$.  For integers congruent to $2$ modulo $4$,
our proof instead uses five-summand representations from the full halved
primitive family $\Pset$.  This establishes the complementary role of the
full Motzkin-coded family in the present argument; it is not an
impossibility statement about all conceivable regular subfamilies.

We now verify explicitly the small elements of $\Pset$ used in the finite
interval certificate.  For
\[
 w=c_1c_2\cdots c_\ell\in\{0,1,2,3\}^{\ell},
\]
let
\[
 \mathbf h(w):=
 \bigl(
 \sigma(c_1),\,
 \sigma(c_1c_2),\,
 \ldots,\,
 \sigma(c_1c_2\cdots c_\ell)
 \bigr)
\]
be the sequence of weights of its nonempty prefixes.  Recall that
\[
 \sigma(3)=1,\qquad
 \sigma(0)=-1,\qquad
 \sigma(1)=\sigma(2)=0.
\]
For the empty word, put $\mathbf h(\lambda)=()$.  In this notation,
$w\in\mathsf{Motz}$ if and only if every component of $\mathbf h(w)$ is
nonnegative and its last component is $0$.

First, we enumerate all two-coloured Motzkin words of lengths $0$, $1$,
and $2$.  The corresponding elements of $\Pset$ are as follows:
\[
\begin{array}{c|c|c|c}
 |w|&w&[3w]_4&\mathbf h(w)\\ \hline
0&\lambda &[3]_4=3&()\\ \hline
1&1 &[31]_4=13&(0)\\
1&2 &[32]_4=14&(0)\\ \hline
2&11 &[311]_4=53&(0,0)\\
2&12 &[312]_4=54&(0,0)\\
2&21 &[321]_4=57&(0,0)\\
2&22 &[322]_4=58&(0,0)\\
2&30 &[330]_4=60&(1,0)
\end{array}
\]

This table is complete.  A word of length $1$ has total weight $0$ only
when its unique letter is $1$ or $2$.  For a word of length $2$, if the
first letter is $1$ or $2$, then the second letter must again be $1$ or
$2$, giving
\[
 11,\ 12,\ 21,\ 22.
\]
If the first letter is $3$, then the second letter must be $0$ in order
to return from height $1$ to total weight $0$, giving the word $30$.
A word beginning with $0$ is impossible because its first prefix has
negative weight.  Thus there are exactly
\[
 1+2+5=8
\]
two-coloured Motzkin words of lengths $0$, $1$, and $2$.

By Corollary~\ref{cor:base4}, the corresponding elements of $\Pset$ are
exactly
\begin{equation}\label{eq:L-def}
 L=\{3,13,14,53,54,57,58,60\}
 \subseteq\Pset.
\end{equation}

Next, we enumerate all two-coloured Motzkin words of length $3$:
\[
\begin{array}{c|c|c}
 w&[3w]_4&\mathbf h(w)\\ \hline
111 &[3111]_4=213&(0,0,0)\\
112 &[3112]_4=214&(0,0,0)\\
121 &[3121]_4=217&(0,0,0)\\
122 &[3122]_4=218&(0,0,0)\\
130 &[3130]_4=220&(0,1,0)\\ \hline
211 &[3211]_4=229&(0,0,0)\\
212 &[3212]_4=230&(0,0,0)\\
221 &[3221]_4=233&(0,0,0)\\
222 &[3222]_4=234&(0,0,0)\\
230 &[3230]_4=236&(0,1,0)\\ \hline
301 &[3301]_4=241&(1,0,0)\\
302 &[3302]_4=242&(1,0,0)\\
310 &[3310]_4=244&(1,1,0)\\
320 &[3320]_4=248&(1,1,0)
\end{array}
\]

This enumeration is also complete.  If the first letter is $1$, the
remaining suffix of length $2$ must be one of
\[
 11,\ 12,\ 21,\ 22,\ 30,
\]
which gives
\[
 111,\ 112,\ 121,\ 122,\ 130.
\]
The same argument for the first letter $2$ gives
\[
 211,\ 212,\ 221,\ 222,\ 230.
\]
If the first letter is $3$, the first prefix has weight $1$, so the
remaining two letters must have total weight $-1$.  The possibilities
compatible with nonnegative prefix weights are
\[
 01,\ 02,\ 10,\ 20,
\]
giving
\[
 301,\ 302,\ 310,\ 320.
\]
A word beginning with $0$ is impossible.  Hence the table contains all
\[
 5+5+4=14
\]
two-coloured Motzkin words of length $3$.

Thus the corresponding elements of generation $4$ of $\Pset$ are exactly
\begin{equation}\label{eq:H-def}
\begin{split}
 H=\{&213,214,217,218,220,229,230,233,234,236,\\
     &241,242,244,248\}
 \subseteq\Pset.
\end{split}
\end{equation}
In fact, $H$ is not merely a subset of $\Pset$: it is exactly the whole of
generation $4$.

Finally, Corollary~\ref{cor:gen} gives
\[
\begin{aligned}
 g_4
 &=3\cdot4^3+\frac{4^3-1}{3}\\
 &=192+21\\
 &=213.
\end{aligned}
\]
Consequently, every element of $\Pset$ below $213$ belongs to one of
generations $1$, $2$, and $3$, whose elements are exactly the eight
members of $L$ listed in the first table.  Thus the stronger identity
\[
 \Pset\cap[0,212]=L
\]
holds, and in particular
\begin{equation}\label{eq:P-small}
 \Pset\cap[0,211]=L.
\end{equation}

\begin{lemma}[Finite certificate for the full family]\label{lem:finitefive}
The following inclusions hold:
\begin{equation}\label{eq:finite-interval-certificate}
\begin{array}{c|c}
\text{sumset}&\text{contained interval}\\ \hline
5L &[215,254]\\
H+4L &[245,486]\\
2H+3L &[439,674]\\
3H+2L &[645,862]\\
4H+L &[855,1046].
\end{array}
\end{equation}
Moreover,
\begin{equation}\label{eq:finite-five-lower-certificate}
 209,210,211\notin F_5(L).
\end{equation}
\end{lemma}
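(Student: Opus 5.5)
This is a finite computational statement, so I will verify it the same way as Lemma~\ref{lem:Ecertificate}: compute each sumset exactly and check every integer in the displayed interval. Since every set involved is explicit and small, the sumsets are finite. By \eqref{eq:L-def} and \eqref{eq:H-def}, $|L|=8$ and $|H|=14$. For each row I will form the iterated sums $S_0=\{0\}$ and $S_{i+1}=S_i+X$, using the appropriate numbers of copies of $H$ and of $L$. The number of copies matches the sumset label, with five summands in total in every row. I then test membership of each integer in the interval. Because sums are commutative, I can store these sets as bit arrays. The largest value that can occur is $4\cdot248+60=1052$, so the arrays are tiny.

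Some hand reasoning makes the pattern transparent and gives a partial independent check. In $5L$ the small summands $3,13,14$ adjust residues, while $53,\dots,60$ supply bulk. Each additional copy of $H$ shifts the window upward by about $200$. The intervals are chosen so that consecutive rows overlap: $254\geq245$, $486\geq439$, $674\geq645$, $862\geq855$. Together they therefore cover $[215,1046]$. This overlap is presumably what is used afterwards, via $1046\geq4\cdot215+5-1=864$, as a seed for Corollary~\ref{cor:tail-propagation} with $k=5$ and $A=\Pset$. That application is legitimate because $\Pset$ satisfies \eqref{eq:digit-closure} by Corollary~\ref{cor:Pclosure}. These observations are only context; the lemma itself is just the five inclusions.

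For \eqref{eq:finite-five-lower-certificate}, I will compute $F_5(L)=\bigcup_{j\le5}jL$ and confirm that $209,210,211$ are absent. I can also argue this by hand using residues mod $4$. The elements of $L$ are congruent to $3,1,2,1,2,1,2,0$ mod $4$. Fix the number $t$ of summands from $\{53,54,57,58,60\}$ in a representation of $n\in\{209,210,211\}$ with at most five summands. Then $t\le3$, since $4\cdot53>211$. Also $t\ge3$, since at most five summands from $\{3,13,14\}$ give at most $70$, and $2\cdot60+3\cdot14=162<209$. So $t=3$, and the remainder $n-(\text{three big})$ must lie in $F_2(\{3,13,14\})=\{0,3,6,13,14,16,17,26,27,28\}$. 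The sums of three big elements range over $[159,180]$, so the remainder lies in $[29,52]$. That range contains no element of $F_2(\{3,13,14\})$, which settles it. I will still let the script confirm this.

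The main obstacle is purely bookkeeping: making sure the sets $L$ and $H$ are exactly as enumerated, which the completeness arguments above guarantee. Beyond that, each of the roughly $830$ interval points simply has to be checked. I will have the supplementary script print an explicit witness decomposition for each point, so that the certificate is auditable with exact integer arithmetic.
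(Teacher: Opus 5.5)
Your proposal is correct and takes essentially the same approach as the paper, which likewise verifies each of the five sumset inclusions and the exclusion of $209,210,211$ from $F_5(L)$ by exact recursive finite-set addition in the supplementary script. Your hand argument for the exclusion is also sound, and the paper does not give it. It forces exactly three summands from $\{53,54,57,58,60\}$, which leaves a remainder in $[29,52]$, beyond the maximum $28$ of $F_2(\{3,13,14\})$; this makes it a useful independent check on the computation.
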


\begin{proof}
Equation~\eqref{eq:finite-interval-certificate} is a finite-sumset
inclusion certificate, while \eqref{eq:finite-five-lower-certificate}
is an exhaustive computation of $F_5(L)$ in the relevant finite range.
Both are verified by the same recursive finite-set addition used in
Lemma~\ref{lem:Ecertificate} and are reproduced by the supplementary
verification script.
\end{proof}

The five intervals in Lemma~\ref{lem:finitefive} overlap, and hence
\begin{equation}\label{eq:fiveP-seed}
 [215,1046]\subseteq5\Pset.
\end{equation}

\begin{proposition}\label{prop:fiveP}
Every integer $n\geq215$ is a sum of exactly five elements of $\Pset$.
Consequently,
\[
 [212,\infty)\subseteq F_5(\Pset).
\]
The lower endpoint is sharp: none of $209,210,211$ belongs to
$F_5(\Pset)$.
\end{proposition}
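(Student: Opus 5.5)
The plan is to apply Corollary~\ref{cor:tail-propagation} with $k=5$ and $A=\Pset$, using the seed interval \eqref{eq:fiveP-seed}.

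\textbf{Tail.} By Corollary~\ref{cor:Pclosure}, $\Pset\subseteq\mathbb Z_{>0}$ satisfies the digit-closure condition \eqref{eq:digit-closure}. The seed hypothesis \eqref{eq:tail-seed} with $k=5$ and $M=215$ requires
\[
[215,\,4\cdot215+4]=[215,864]\subseteq5\Pset .
\]
This is contained in \eqref{eq:fiveP-seed}, since $1046\geq864$. Corollary~\ref{cor:tail-propagation} then yields $[215,\infty)\subseteq5\Pset$, which is the first assertion.

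\textbf{Extension down to $212$.} We handle $212,213,214$ directly, using $F_5(\Pset)$, which allows fewer than five summands. We have $213,214\in H\subseteq\Pset$. For $212$, note $212=154+58$ and $154=54+57+\ldots$; the simplest route is to search $F_5(L)$ for a decomposition. One candidate is $212=58+54+54+46$, but $46\notin L$, so that fails. Another is $212=60+57+54+41$, which also fails. I expect this small search to be tedious but routine, with the supplementary computation settling it. If $212$ lies in no $F_5(L)$-decomposition, one could try summands from $H$; these are all $\geq213$ and so are too large. Hence $212\in F_5(\Pset)$ must come from $F_5(L)$, and I would verify it there, for instance by checking $212=60+54+54+44$ variants. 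This is the step I would double-check, using the fact that $\Pset\cap[0,212]=L$.

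\textbf{Sharpness.} Any representation of $n\leq211$ as a sum of at most five elements of $\Pset$ uses only elements $\leq211$, since all summands are positive. By \eqref{eq:P-small}, these elements lie in $L$. Therefore $F_5(\Pset)\cap[0,211]=F_5(L)\cap[0,211]$, and \eqref{eq:finite-five-lower-certificate} shows that $209,210,211\notin F_5(\Pset)$.

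The main obstacle is the membership of $212$. The other steps follow by direct invocation of Corollary~\ref{cor:tail-propagation}, Corollary~\ref{cor:Pclosure} and the certificates of Lemma~\ref{lem:finitefive}. For $212$, a finite check in $F_5(L)$ is needed, or possibly the exact statement relies on a computation I would reproduce with the verification script.
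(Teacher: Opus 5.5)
Your treatment of the tail and of sharpness matches the paper's argument. You take digit closure from Corollary~\ref{cor:Pclosure}, extract the seed $[215,864]\subseteq5\Pset$ from \eqref{eq:fiveP-seed}, and apply Corollary~\ref{cor:tail-propagation} with $k=5$, $M=215$. For sharpness you use $F_5(\Pset)\cap[0,211]=F_5(L)\cap[0,211]$, via \eqref{eq:P-small}, together with \eqref{eq:finite-five-lower-certificate}. Your handling of $213$ and $214$ as single elements of $H\subseteq\Pset$ is also correct, and slightly simpler than the paper's four-term sums $53+53+53+54$ and $53+53+54+54$.

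\textbf{The gap is $212$.} The inclusion $[212,\infty)\subseteq F_5(\Pset)$ requires $212\in F_5(\Pset)$, and your proposal never establishes it. Each candidate decomposition you try uses a non-member of $L$ ($46$, $41$, $44$). You then defer to an unspecified computation. No certificate in Lemma~\ref{lem:finitefive} covers $212$: the lower certificate concerns only $209,210,211$, and the interval certificates begin at $215$. So as written this part of the statement is unproved.

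The missing observation is immediate: $212=4\cdot53=53+53+53+53$ with $53\in L\subseteq\Pset$, which is the decomposition the paper uses. With that line added, your proof is complete and coincides with the paper's.
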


\begin{proof}
By Corollary~\ref{cor:Pclosure},
\[
 4\cdot\Pset+\{1,2\}\subseteq\Pset.
\]
Equation~\eqref{eq:fiveP-seed} contains, in particular,
\[
 [215,864]\subseteq5\Pset.
\]
Since
\[
 864=4\cdot215+5-1,
\]
Corollary~\ref{cor:tail-propagation}, applied with $k=5$ and $M=215$,
gives
\[
 [215,\infty)\subseteq5\Pset.
\]
Moreover,
\[
\begin{aligned}
212&=53+53+53+53,\\
213&=53+53+53+54,\\
214&=53+53+54+54,
\end{aligned}
\]
so $[212,\infty)\subseteq F_5(\Pset)$.  Finally,
\eqref{eq:P-small} and \eqref{eq:finite-five-lower-certificate} give
\[
 209,210,211\notin F_5(\Pset).
\]
\end{proof}
\section{Exact additive representation results}\label{sec:exceptions}

The first exceptional value is explained without computation.

\begin{proposition}\label{prop:46}
The integer $46$ requires exactly eight positive primitive Dyck numbers:
it is a sum of eight, but it is not a sum of at most seven.
\end{proposition}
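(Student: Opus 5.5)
By \eqref{eq:NpD-decomp} and \eqref{eq:P-small}, the positive primitive Dyck numbers not exceeding $46$ are exactly
\[
 2,\quad 4\cdot3=12,\quad 4\cdot13=52>46 .
\]
So only the values $2$ and $12$ can occur as summands. A representation of $46$ therefore has the form $46=2a+12b$ with $a,b\geq0$, and it uses $a+b$ summands. Dividing by $2$ gives the equation $a+6b=23$. I will enumerate its solutions directly.

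The possible values of $b$ are $0,1,2,3$, and they give $a=23,17,11,5$ respectively. The corresponding summand counts $a+b$ are $23,18,13,8$. The minimum, $8$, is attained by
\[
 46=12+12+12+2+2+2+2+2 .
\]
This shows that $46$ is a sum of eight positive primitive Dyck numbers. No solution has $a+b\leq7$, so $46$ is not a sum of at most seven.

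I could also argue the lower bound by congruences. Because $46\equiv2\pmod 4$, Lemma~\ref{lem:mod4} forces an odd number of summands equal to $2$. However, the direct enumeration above already settles the matter.

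The only point that needs care is the claim that no primitive Dyck number lies strictly between $12$ and $46$. This is exactly the statement $\Pset\cap[0,11]=\{3\}$, which follows from \eqref{eq:P-small}. Alternatively, primitive Dyck words of length at most $4$ have values $2$ and $12$, and words of length $6$ already have values at least $[110100]_2=52$.
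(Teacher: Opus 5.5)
Your proof is correct and essentially the same as the paper's: both restrict the possible summands to $2$ and $12$, reduce to a linear equation in their multiplicities, and minimize the number of summands. You enumerate the four cases, whereas the paper uses the bound $23-5a\geq8$ with $a$ the number of $12$'s. The one real difference is how the restriction is justified: you cite \eqref{eq:NpD-decomp} and \eqref{eq:P-small}, while the paper argues directly that every primitive word of length at least six begins with $11$ and so has value at least $48$.
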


\begin{proof}
Below $48$ the only positive primitive Dyck numbers are $2$ and $12$.
Indeed, the primitive words of length two and four are $10$ and $1100$,
whereas every primitive word of length at least six begins with $11$ and
therefore represents an integer at least $[110000]_2=48$.  Thus any
representation of $46$ has the form
\[
        46=12a+2b,\qquad a,b\geq0.
\]
Equivalently, $6a+b=23$.  Since $6\cdot4>23$, we have $a\leq3$, and
hence the number of summands satisfies
\[
        a+b=23-5a\geq8.
\]
Equality is attained by
\[
        46=12+12+12+2+2+2+2+2.
\]
\end{proof}

The decomposition
\[
 \NpD=\{0,2\}\cup4\cdot\Pset
\]
converts summand-count conditions uniformly into finite-sum conditions on
$\Pset$.

\begin{lemma}[Mod-$4$ summand criterion]\label{lem:mod4-sumcriterion}
Let $\epsilon\in\{0,1\}$, let $h\geq0$, and put
$N=4m+2\epsilon$.  Then $N$ is a sum of at most $h$ positive primitive
Dyck numbers if and only if
\begin{equation}\label{eq:mod4-sumcriterion}
 m\in
 \bigcup_{\substack{0\leq s\leq h\\
                    s\equiv\epsilon\pmod2}}
 \left(
  \frac{s-\epsilon}{2}+F_{h-s}(\Pset)
 \right).
\end{equation}
\end{lemma}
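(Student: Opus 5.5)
The criterion follows from sorting the summands by size, using the decomposition \eqref{eq:NpD-decomp}: every positive primitive Dyck number is either $2$ or of the form $4p$ with $p\in\Pset$.

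For the forward direction, suppose $N=4m+2\epsilon$ is a sum of at most $h$ positive primitive Dyck numbers. Let $s$ be the number of summands equal to $2$. The remaining summands number at most $h-s$, and each has the form $4p_i$ with $p_i\in\Pset$. Then
\[
 N=2s+4P,\qquad P\in F_{h-s}(\Pset).
\]
Reducing modulo $4$ gives $2s\equiv2\epsilon\pmod4$, so $s\equiv\epsilon\pmod2$. Hence $2s-2\epsilon$ is divisible by $4$, and
\[
 m=\frac{N-2\epsilon}{4}=\frac{s-\epsilon}{2}+P.
\]
We also have $0\le s\le h$, so $m$ lies in the union \eqref{eq:mod4-sumcriterion}. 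If $\epsilon=1$, then $s\ge1$, and $(s-\epsilon)/2$ is a nonnegative integer; if $\epsilon=0$, it is simply $s/2$.

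For the converse, suppose $m=(s-\epsilon)/2+P$, where $0\le s\le h$, $s\equiv\epsilon\pmod 2$, and $P$ is a sum of $j\le h-s$ elements of $\Pset$. When $\epsilon=1$ this forces $s\ge1$, so $(s-\epsilon)/2\ge0$. Then
\[
 4m+2\epsilon=2s+4P.
\]
This exhibits $N$ as a sum of $s$ copies of $2$ together with the $j$ numbers $4p_i$. Each $4p_i$ is a positive primitive Dyck number, since $4\cdot\Pset\subseteq\NpD\setminus\{0\}$ by the definition \eqref{eq:P-def}. The total number of summands is $s+j\le h$.

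There is no real obstacle; the points needing care are bookkeeping. First, $F_{h-s}(\Pset)$ contains $0$ through the empty sum, so we must check that the case with no $\Pset$-summands is included. Second, when $\epsilon=1$ the parity condition excludes $s=0$, so the shift $(s-\epsilon)/2$ is always a nonnegative integer. Third, sums of elements of $\Pset$ correspond exactly to sums of primitive Dyck numbers that are multiples of $4$; Lemma~\ref{lem:mod4} guarantees that no summand other than $2$ is congruent to $2$ modulo $4$.
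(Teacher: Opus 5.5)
Your proof is correct and follows essentially the same route as the paper: count the summands equal to $2$, use Lemma~\ref{lem:mod4} to write the remaining summands as $4p_i$ with $p_i\in\Pset$, reduce modulo $4$ to get $s\equiv\epsilon\pmod 2$, and divide by $4$, with the converse obtained by reversing the substitution. Your additional bookkeeping (that $s\geq1$ when $\epsilon=1$, and that the empty sum lies in $F_{h-s}(\Pset)$) only makes explicit what the paper leaves implicit.
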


\begin{proof}
Let $s$ be the number of summands equal to $2$.  By
Lemma~\ref{lem:mod4}, every remaining positive summand has the form
$4p_i$ with $p_i\in\Pset$.  Reduction modulo $4$ gives
$s\equiv\epsilon\pmod2$, and
\[
 4m+2\epsilon=2s+4\sum_i p_i
\]
is equivalent to
\[
 m=\frac{s-\epsilon}{2}+\sum_i p_i.
\]
There are at most $h-s$ remaining summands, which proves necessity.
Conversely, any representation belonging to one of the sets in
\eqref{eq:mod4-sumcriterion} gives a representation of $N$ after
replacing each $p_i$ by $4p_i$ and adjoining $s$ copies of $2$.
\end{proof}

\begin{corollary}\label{cor:sixcriterion}
Let $N=4m+2$.  Then $N$ is a sum of at most six positive primitive Dyck
numbers if and only if
\begin{equation}\label{eq:sixcriterion}
 m\in F_5(\Pset)
 \cup\bigl(1+F_3(\Pset)\bigr)
 \cup\bigl(2+F_1(\Pset)\bigr).
\end{equation}
\end{corollary}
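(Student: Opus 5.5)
This is a direct specialization of Lemma~\ref{lem:mod4-sumcriterion}, so the plan is to instantiate that lemma with $\epsilon=1$ and $h=6$ and simplify the resulting union. Since $N=4m+2$ has the form $4m+2\epsilon$ with $\epsilon=1$, the lemma says that $N$ is a sum of at most six positive primitive Dyck numbers exactly when
\[
 m\in\bigcup_{\substack{0\leq s\leq 6\\ s\ \text{odd}}}\left(\frac{s-1}{2}+F_{6-s}(\Pset)\right).
\]

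The only work is to list the admissible values of $s$. The odd integers in $[0,6]$ are $s=1,3,5$. For each of them we compute the shift $(s-1)/2$ and the index $6-s$:
\begin{itemize}
 \item $s=1$ gives shift $0$ and index $5$, producing $F_5(\Pset)$;
 \item $s=3$ gives shift $1$ and index $3$, producing $1+F_3(\Pset)$;
 \item $s=5$ gives shift $2$ and index $1$, producing $2+F_1(\Pset)$.
\end{itemize}
Taking the union of these three sets gives exactly \eqref{eq:sixcriterion}. Both directions of the equivalence come for free, because Lemma~\ref{lem:mod4-sumcriterion} is itself an equivalence.

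No step here is a real obstacle. The one thing to check is that the parity constraint $s\equiv\epsilon\pmod 2$ removes every even $s$, including $s=0$. This is consistent with Lemma~\ref{lem:mod4}: the number of summands equal to $2$ must be odd when $N\equiv2\pmod4$. The criterion also has the right reading with respect to $0\in F_j(\Pset)$. For example, $m=2$ lies in $2+F_1(\Pset)$ and corresponds to $N=10=2+2+2+2+2$, which uses five summands, at most six. So the padding convention built into $F_j$ is being used correctly.
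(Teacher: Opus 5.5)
Your proposal is correct and coincides with the paper's proof: it applies Lemma~\ref{lem:mod4-sumcriterion} with $\epsilon=1$ and $h=6$ and reads off the three admissible values $s=1,3,5$. Your side remark about the padding convention, illustrated by $m=2$ and $N=10$, is also accurate.
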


\begin{proof}
Apply Lemma~\ref{lem:mod4-sumcriterion} with
$\epsilon=1$ and $h=6$, and use $s=1,3,5$.
\end{proof}

\begin{lemma}[Small multiples of four]\label{lem:smallmultiples}
Among the positive multiples of $4$ below $100$, the only integer not
representable by at most six positive primitive Dyck numbers is $44$.
Moreover,
\[
 44=12+12+12+2+2+2+2.
\]
\end{lemma}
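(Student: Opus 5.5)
We reduce the statement to a small finite check on the halved family $\Pset$, using Lemma~\ref{lem:mod4-sumcriterion} with $\epsilon=0$ and $h=6$. Write $N=4m$ with $1\le m\le 24$. The criterion then reads
\[
 m\in F_6(\Pset)\cup\bigl(1+F_4(\Pset)\bigr)\cup\bigl(2+F_2(\Pset)\bigr)\cup\bigl(3+F_0(\Pset)\bigr),
\]
which corresponds to $s=0,2,4,6$ copies of $2$. Since $m\le 24<53$, only the elements $3,13,14$ of $\Pset$ can occur, by \eqref{eq:P-small}. Define
\[
 Q_t=\{3a+13b+14c:\ a+b+c\le t\}.
\]
Then $N$ is representable exactly when $m\in Q_6\cup(1+Q_4)\cup(2+Q_2)\cup\{3\}$.

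We would list these sets explicitly in the range $1\le m\le 24$.

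\begin{itemize}
\item $Q_6$ contains the multiples of $3$ up to $18$. It also contains the sums $13+3j$ and $14+3j$ for $j\le 5$, namely $13,16,19,22$ and $14,17,20,23$. Finally it contains $26,27,28,\dots$, which lie beyond the range. So $Q_6\cap[1,24]=\{3,6,9,12,13,14,15,16,17,18,19,20,22,23\}$.
\item $1+Q_4$ contributes $1,4,7,10,13$ together with $14,17,20$ (from $13+3j$), $15,18,21$ (from $14+3j$), and nothing else below $25$, since $26+1=27$.
\item $2+Q_2$ contributes $2,5,8$ together with $15,18$ and $16,19$.
\item The term $3+Q_0$ contributes $3$.
\end{itemize}

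Taking the union over $[1,24]$ and looking for gaps: $11$ is absent from every list, while $21$ comes from $1+Q_4$ and $24$ from $Q_6$ as $3\cdot 8$? No — that uses eight summands. Recheck $24$: $24=14+3+3+3+1$ would need $23\in Q_4$, but $23=14+3\cdot 3$ uses four summands, so $23\in Q_4$ and hence $24\in 1+Q_4$. Similarly $11$: we need $11\in Q_6$, or $10\in Q_4$, or $9\in Q_2$. But $9=3\cdot 3$ needs three summands, and $10\notin Q_4$ since $10$ is not a multiple of $3$ and $13,14$ are too large. So $11$ fails, and $N=44$ is the unique exception. The careful bookkeeping of summand counts in this union is the only delicate step. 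Each case is a short hand check, and it could also be delegated to the supplementary script.

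For the positive direction, each representable $m$ yields an explicit representation of $N$: replace each $p_i$ by $4p_i$ and adjoin the appropriate copies of $2$. The final claim is just the arithmetic $3\cdot12+4\cdot2=44$, which uses $7$ summands.
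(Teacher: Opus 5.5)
Your proof is correct and is essentially the paper's argument: the paper runs the same exhaustive finite check directly on the summands $2,12,52,56$, while you route it through Lemma~\ref{lem:mod4-sumcriterion} (with $\epsilon=0$, $h=6$) so that it becomes a hand computation over $\{3,13,14\}$, where $\{12,52,56\}=4\cdot\{3,13,14\}$. Clean up the bookkeeping for $1+Q_4$. Since $13+3j$ and $14+3j$ lie in $Q_4$ for $j\leq3$, the set $1+Q_4$ also contains $23$ and $24$, so ``nothing else below $25$'' is false, and the mid-argument self-correction about $24$ should simply read $24=1+(14+3+3+3)\in1+Q_4$; because $23\in Q_6$ anyway, your final union $[1,24]\setminus\{11\}$ is right.
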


\begin{proof}
The positive primitive Dyck numbers below $100$ are precisely
\[
 2,12,52,56.
\]
An exhaustive computation of sums of at most six elements of this finite
set, truncated below $100$, shows that the only missing positive multiple
of $4$ is $44$.  The displayed identity gives a seven-summand
representation.  This finite check is also included in the supplementary
verification script.
\end{proof}

\begin{lemma}[Finite exceptional-set certificate]\label{lem:finiteexception}
Let $L$ be the set defined in \eqref{eq:L-def}.  Then
\begin{equation}\label{eq:finiteexception}
\begin{split}
 [0,211]\setminus\Bigl(
 F_5(L)&\cup(1+F_3(L))\cup(2+F_1(L))\Bigr)\\
 &=\{8,11,24,38,49,50,51,209,210,211\}.
\end{split}
\end{equation}
\end{lemma}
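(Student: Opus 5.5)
This lemma is a finite computation with the eight-element set $L=\{3,13,14,53,54,57,58,60\}$. I would carry it out in the same style as Lemmas~\ref{lem:Ecertificate} and~\ref{lem:finitefive}, organised so that it can also be checked by hand.

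\textbf{Step 1: compute truncated sumsets.} Set $T_0=\{0\}$ and $T_{i+1}=(T_i+L)\cap[0,211]$. Truncation is harmless because every element of $L$ is positive, so no sum exceeding $211$ can come back into range. Then
\[
F_j(L)\cap[0,211]=T_0\cup\cdots\cup T_j .
\]
I would compute $F_1$, $F_3$ and $F_5$ in this way, form the translates $1+F_3(L)$ and $2+F_1(L)$, take the union, and list its complement in $[0,211]$.

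\textbf{Step 2: explain the answer by hand.} To make the computation transparent I would split sums into a small part and a large part. The small elements are $3,13,14$, and the large ones are $53,54,57,58,60$.
\begin{enumerate}[(i)]
\item With at most five small summands, the reachable values are $3a+13b+14c$ with $a+b+c\le5$.
\item Adding $t$ large summands shifts the reachable sets by the sumset $t\cdot\{53,54,57,58,60\}$, which covers fairly dense windows near $53t$ through $60t$.
\end{enumerate}
One checks that these windows, together with the translates by $1$ and $2$, overlap except at the listed points. Two examples illustrate the check.
\begin{itemize}
\item $8$: it is not a sum of at most five elements of $L$, since only $3$ is small enough and $8$ is not a multiple of $3$. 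It is not in $1+F_3(L)$, because $7$ is not a multiple of $3$. It is not in $2+F_1(L)$, because $6\notin F_1(L)$.
\item $11$: the same argument applies, using $10$, $9$ and $9$. Here $9=3\cdot 3$ lies in $F_3(L)$, but we need $11-1=10$, and $10\notin F_3(L)$.
\end{itemize}

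\textbf{Step 3: verify the upper edge.} The values $209,210,211$ are excluded from $F_5(L)$ by \eqref{eq:finite-five-lower-certificate}. It remains to show that $208,209,210\notin F_3(L)$ and $207,208,209\notin F_1(L)$.
\begin{itemize}
\item The $F_1$ condition is immediate, since $\max L=60$.
\item For $F_3$: $3\cdot 60=180<207$, so no sum of at most three elements reaches $208$.
\end{itemize}

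\textbf{Main obstacle.} The difficulty is not conceptual. It is exhaustively certifying the middle range, roughly $50$ to $208$, where many residues must be checked against overlapping sumsets. I would rely on the supplementary exact-arithmetic script for this range and keep by hand only the boundary cases and the small exceptions $8,11,24,38,49,50,51$, each handled by the residue argument used above for $8$.
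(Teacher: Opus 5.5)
Your Step~1 is exactly the paper's proof: compute $F_1(L)$, $F_3(L)$ and $F_5(L)$ by truncated recursive sumset addition, take the union with the translates $1+F_3(L)$ and $2+F_1(L)$, and read off the complement in $[0,211]$, so the proposal is correct and follows the paper. One caution about your optional hand checks: the mod-$3$ residue argument you used for $8$ does not exclude $24,38,49,50,51$, which need a bound on the number of summands (e.g.\ $24=8\cdot 3$, $38=13+13+3+3+3+3$ and $49=13+13+14+3+3+3$ need eight, six and six summands).
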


\begin{proof}
Starting from the finite set $L$, compute $F_1(L)$, $F_3(L)$, and
$F_5(L)$ exactly by recursive finite-set addition, truncate the result to
$[0,211]$, and take the union and its complement.  This gives
\eqref{eq:finiteexception}; the computation is reproduced by the
supplementary verification script.
\end{proof}

\begin{proof}[Proof of Theorem~\ref{thm:classification}]
Suppose first that $N\equiv0\pmod4$.  If $N\geq100$, write $N=4n$.
Then $n\geq25$, so Proposition~\ref{prop:E} and \eqref{eq:4E} give a
representation of $N$ with at most six primitive Dyck summands.
Lemma~\ref{lem:smallmultiples} treats the remaining positive multiples
of $4$ and shows that $r(44)=7$.

Now let $N\equiv2\pmod4$, say $N=4m+2$.  If $N\geq850$, then
$m\geq212$, so Proposition~\ref{prop:fiveP} gives
$m\in F_5(\Pset)$.  Corollary~\ref{cor:sixcriterion} gives a
representation of $N$ with at most six primitive Dyck summands.  Together
with the preceding paragraph, this already shows that every even
$N\geq848$ has such a representation.

It remains to classify the integers $N\leq846$ with
$N\equiv2\pmod4$.  Here $0\leq m\leq211$.  By
\eqref{eq:P-small}, Lemma~\ref{lem:finiteexception}, and
Corollary~\ref{cor:sixcriterion}, the corresponding values $N=4m+2$
that are not sums of at most six primitive Dyck numbers are precisely
\[
 34,46,98,154,198,202,206,838,842,846.
\]
Thus these ten numbers, together with $44$, are exactly the positive even
integers that are not sums of at most six primitive Dyck numbers.

All of them except $46$ have seven-summand representations:
\[
\begin{aligned}
34&=12+12+2+2+2+2+2,\\
44&=12+12+12+2+2+2+2,\\
98&=56+12+12+12+2+2+2,\\
154&=52+52+12+12+12+12+2,\\
198&=56+52+52+12+12+12+2,\\
202&=56+56+52+12+12+12+2,\\
206&=56+56+56+12+12+12+2,\\
838&=240+240+240+52+52+12+2,\\
842&=240+240+240+56+52+12+2,\\
846&=240+240+240+56+56+12+2.
\end{aligned}
\]
Proposition~\ref{prop:46} shows that $46$ requires exactly eight
summands.  This proves all assertions of the theorem.
\end{proof}

\begin{corollary}\label{cor:threshold}
The integer $848$ is the least even integer $T$ such that every even
integer $N\geq T$ is a sum of at most six primitive Dyck numbers.
\end{corollary}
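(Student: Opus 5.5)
The corollary follows from Theorem~\ref{thm:classification}. I will prove two things: every even $N\geq 848$ is a sum of at most six primitive Dyck numbers, and $846$ is not.

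For the upper bound, take an even $N\geq848$ and split by residue modulo $4$.

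If $N\equiv0\pmod4$, write $N=4n$ with $n\geq212\geq25$. Proposition~\ref{prop:E} gives $n\in6\mathcal E$. Then \eqref{eq:4E} turns this into a representation of $N$ by at most six positive primitive Dyck numbers, after deleting the zero summands.

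If $N\equiv2\pmod4$, then $N\geq850$. Write $N=4m+2$ with $m\geq212$. Proposition~\ref{prop:fiveP} gives $m\in F_5(\Pset)$, and Corollary~\ref{cor:sixcriterion} (the case $s=1$) gives a representation of $N$ with at most six summands.

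For minimality, it suffices to show that no even $T\leq846$ works. Any such $T$ satisfies $T\leq846$, so it would force $r(846)\leq6$. But Theorem~\ref{thm:classification} gives $r(846)=7$.

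If I want to avoid quoting the full classification, the exclusion of $846$ can be read off directly. Here $846=4\cdot211+2$, so $m=211$. By \eqref{eq:P-small}, the elements of $\Pset$ that can occur in any of the three sets of Corollary~\ref{cor:sixcriterion} lie in $L$, since every summand is at most $211$. Lemma~\ref{lem:finiteexception} lists $211$ among the missing values, so $846$ needs more than six summands.

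No step is a real obstacle. The only care needed is the bookkeeping: the threshold $848$ is a multiple of $4$, so the first value of the residue class $2$ modulo $4$ that is covered is $850$, and this corresponds exactly to the sharp bound $m\geq212$ in Proposition~\ref{prop:fiveP}.
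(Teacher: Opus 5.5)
Your proof is correct and takes essentially the paper's approach. The paper reads the corollary directly off Theorem~\ref{thm:classification}. Your upper-bound argument (Proposition~\ref{prop:E} with \eqref{eq:4E} for $N\equiv0\pmod4$, and Proposition~\ref{prop:fiveP} with Corollary~\ref{cor:sixcriterion} for $N\equiv2\pmod4$) and your direct exclusion of $846$ via $m=211$ and Lemma~\ref{lem:finiteexception} just unpack the corresponding steps of the paper's proof of that theorem.
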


\begin{proof}
Theorem~\ref{thm:classification} shows that six summands suffice for every
even integer at least $848$, whereas $846$ requires seven summands.
\end{proof}

\begin{corollary}\label{cor:eight}
Every nonnegative even integer is a sum of at most eight elements of
$\NpD$, and $46$ is the unique even integer that is not a sum of at most
seven elements of $\NpD$.
\end{corollary}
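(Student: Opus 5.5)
The corollary follows directly from Theorem~\ref{thm:classification} together with Proposition~\ref{prop:46}. I would treat the two assertions separately and take care over the trivial case $N=0$ and the padding convention for zero summands.

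\emph{At most eight summands.} I would first dispose of $N=0$, which is the empty sum, or equivalently a sum of zero elements, or of copies of $0\in\NpD$ as padding. For positive even $N$, Theorem~\ref{thm:classification} gives $r(N)\leq 7$ for every $N\neq 46$, and $r(46)=8$. Hence $r(N)\leq 8$ for all positive even $N$. A representation by $r(N)$ positive primitive Dyck numbers is a fortiori a sum of at most eight elements of $\NpD$, padded with zeros if one insists on exactly eight.

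\emph{Uniqueness of $46$.} Here the one point to check is that allowing the element $0$ of $\NpD$ does not change the minimal count. Any representation of $N$ by at most seven elements of $\NpD$ becomes, after deleting zero entries, a representation by at most seven positive primitive Dyck numbers. This is exactly the padding remark in Section~\ref{sec:preliminaries}, so ``sum of at most seven elements of $\NpD$'' is equivalent to $r(N)\leq 7$ for $N>0$. Then Proposition~\ref{prop:46} shows that $46$ is not such a sum. Conversely, Theorem~\ref{thm:classification} gives $r(N)\leq7$ for every other positive even $N$, and $N=0$ is trivially representable.

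There is no real obstacle: the argument is bookkeeping on top of the classification already proved. The only step needing care is the equivalence between sums of elements of $\NpD$ and sums of positive primitive Dyck numbers, which rests on $0\in\NpD$ and on deleting zero summands.
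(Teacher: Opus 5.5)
Your proposal is correct and matches the paper's proof. The paper reads both assertions directly off Theorem~\ref{thm:classification} (which already includes $r(46)=8$ via Proposition~\ref{prop:46}) and treats $0$ as the empty sum. Your explicit remark about deleting zero summands is just the padding convention from Section~\ref{sec:preliminaries} spelled out.
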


\begin{proof}
The assertion for positive integers follows immediately from
Theorem~\ref{thm:classification}; the integer $0$ is the empty sum.
\end{proof}

Following standard additive-basis terminology~\cite{Nathanson}, for
$A\subseteq2\mathbb N_0$ we call $A$ a \emph{relative additive basis of
order at most $h$ for $2\mathbb N_0$} if every element of $2\mathbb N_0$ is a
sum of at most $h$ elements of $A$.  Its relative order is \emph{exactly $h$}
if $h$ is the least integer with this property.  We similarly call $A$ a
\emph{relative asymptotic additive basis of order at most $h$} if every
sufficiently large element of $2\mathbb N_0$ is a sum of at most $h$ elements
of $A$.

Theorem~\ref{thm:classification} gives the complete exceptional set for
six-summand representability.  Corollary~\ref{cor:eight} shows that $\NpD$ is
a relative additive basis of exact order eight for $2\mathbb N_0$, while
Corollary~\ref{cor:threshold} identifies the sharp six-summand threshold.
The next section explains the difference between the two residue classes and
proves that the relative asymptotic order is exactly six.
\section{Generation gaps and asymptotic optimality}\label{sec:asymptotic}

For multiples of $4$, the five-summand representations in $\Pset$ may be
used directly.  For integers congruent to $2$ modulo $4$, however, an odd
number of copies of the exceptional summand $2$ must be used.  This
distinction raises the relative asymptotic order of the full primitive
family from $5$ to $6$.

\begin{corollary}[Five-summand criterion]\label{cor:fivecrit}
Let $N=4m+2$ with $m\geq3$.  Then $r(N)\leq5$ if and only if
\begin{equation}\label{eq:fivecrit}
 m\in F_4(\Pset)\cup\bigl(1+F_2(\Pset)\bigr).
\end{equation}
\end{corollary}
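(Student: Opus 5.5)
This is the specialization of Lemma~\ref{lem:mod4-sumcriterion} to $\epsilon=1$ and $h=5$. That lemma says $N=4m+2$ is a sum of at most five positive primitive Dyck numbers if and only if
\[
 m\in\bigcup_{\substack{0\le s\le 5\\ s\ \mathrm{odd}}}\Bigl(\tfrac{s-1}{2}+F_{5-s}(\Pset)\Bigr).
\]
Here $s$ is the number of summands equal to $2$. For $N>0$ the relation $r(N)\le5$ is exactly this representability, so the plan is to write out the three admissible values $s=1,3,5$ and check that the $s=5$ term contributes nothing new once $m\geq3$.

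The cases give:
\begin{itemize}
\item $s=1$ contributes $0+F_4(\Pset)=F_4(\Pset)$;
\item $s=3$ contributes $1+F_2(\Pset)$;
\item $s=5$ contributes $2+F_0(\Pset)=\{2\}$.
\end{itemize}
The $s=5$ term corresponds to the representation $N=10=2+2+2+2+2$, that is, $m=2$. Since the hypothesis $m\geq3$ excludes $m=2$, this singleton can be dropped from the union. What remains is exactly \eqref{eq:fivecrit}.

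The only point needing care is the role of the hypothesis $m\ge3$. Without it the criterion would fail at $m=2$: we have $2\notin F_4(\Pset)$ because the least element of $\Pset$ is $3$, and $2\notin 1+F_2(\Pset)$ because $1\notin F_2(\Pset)$. Yet $r(10)=5$, so the right-hand side would wrongly exclude $N=10$. With $m\ge3$ the equivalence follows directly from Lemma~\ref{lem:mod4-sumcriterion}, so I expect no real obstacle; the main thing is to state the discarded $s=5$ case explicitly.
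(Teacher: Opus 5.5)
Your proposal is correct and matches the paper's proof: you specialize Lemma~\ref{lem:mod4-sumcriterion} to $\epsilon=1$, $h=5$, read off the contributions $F_4(\Pset)$, $1+F_2(\Pset)$ and $\{2\}$ from $s=1,3,5$, and discard the singleton using $m\geq3$. Your extra check that the criterion genuinely fails at $m=2$ (that is, $N=10$ with $r(10)=5$) is correct and nicely justifies the hypothesis.
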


\begin{proof}
Apply Lemma~\ref{lem:mod4-sumcriterion} with $\epsilon=1$ and $h=5$.
The values $s=1,3,5$ give
\[
 F_4(\Pset)\cup\bigl(1+F_2(\Pset)\bigr)\cup\{2\}.
\]
The last set is irrelevant because $m\geq3$.
\end{proof}

The following lemma isolates the lower-bound mechanism created by the
gaps between generations.

\begin{lemma}[Recurring obstruction]\label{lem:obstruction}
For every integer $k\geq2$, put
\[
 x_k=10\cdot4^k-2.
\]
Then
\begin{equation}\label{eq:obstruction}
 x_k\notin F_4(\Pset)
 \qquad\text{and}\qquad
 x_k-1\notin F_2(\Pset).
\end{equation}
\end{lemma}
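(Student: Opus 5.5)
The plan is a pure size argument based on the generation structure of $\Pset$ from Corollary~\ref{cor:gen}; no finite computation should be needed. The key observation is that, by \eqref{eq:three-g},
\[
 x_k=3g_{k+1}-1,\qquad x_k-1=3g_{k+1}-2.
\]
So both targets lie just below three times the smallest element of generation $k+1$, and they fall into gaps that sums of a few summands from lower generations cannot reach.

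First, I would discard large summands. Every element of generation $d\geq k+2$ is at least $g_{k+2}>3\cdot4^{k+1}=12\cdot4^k$. This already exceeds $x_k$, so any representation of $x_k$ or $x_k-1$ uses only elements of generations at most $k+1$. I would then split the summands by generation using the upper bounds $U_{k+1}=4^{k+1}-2^k$ and $U_d\leq U_k=4^k-2^{k-1}$ for $d\leq k$. Let $j$ be the number of summands from generation $k+1$, each at least $g_{k+1}$. Since $3g_{k+1}>x_k$, we have $j\leq2$.

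For $x_k\notin F_4(\Pset)$, I would bound the largest possible sum in each case $j=0,1,2$, with at most $4-j$ further summands each at most $U_k$.
\begin{itemize}
\item If $j=2$, the sum is at most $2U_{k+1}+2U_k=10\cdot4^k-3\cdot2^k$. This is less than $10\cdot4^k-2$ because $3\cdot2^k>2$.
\item If $j=1$, the sum is at most $U_{k+1}+3U_k<7\cdot4^k$.
\item If $j=0$, the sum is at most $4U_k<4\cdot4^k$.
\end{itemize}
All three bounds lie strictly below $x_k$, so $x_k$ is not a sum of at most four elements of $\Pset$.

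For $x_k-1\notin F_2(\Pset)$, the argument is simpler. Two summands from generations at most $k+1$ give a sum of at most $2U_{k+1}=8\cdot4^k-2^{k+1}$, which is less than $10\cdot4^k-3$. Fewer summands give even less, and generations $\geq k+2$ are excluded as above.

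The only delicate step is the case $j=2$ for $x_k$. There the margin $3\cdot2^k-2$ is small, and the exact value of $U_d$ from Lemma~\ref{lem:extremal} (not merely $U_d<4^d$) is what closes the gap. I would therefore check that arithmetic carefully and confirm that the hypothesis $k\geq2$ causes no issue with small generations (for $k\geq1$ all generations involved are well defined).
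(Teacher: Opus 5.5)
Your proof is correct and follows the paper's argument essentially verbatim. You discard generations $\geq k+2$ via $g_{k+2}>12\cdot4^k$, bound two large plus two small summands by $2U_{k+1}+2U_k=10\cdot4^k-3\cdot2^k$, exclude three large summands via $3g_{k+1}=x_k+1$, and handle $x_k-1$ with $2U_{k+1}=8\cdot4^k-2^{k+1}$. The only cosmetic difference is the order: you first force $j\leq2$ and then treat $j=0,1,2$ separately, whereas the paper covers $N_{\rm b}\leq2$ with the single bound $2U_{k+1}+2U_k$ and then rules out $N_{\rm b}\geq3$.
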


\begin{proof}
Fix $k\geq2$, and abbreviate $x=x_k$.  The least element of generation
$k+2$ is
\[
 g_{k+2}>3\cdot4^{k+1}=12\cdot4^k>x.
\]
Consequently, every element of $\Pset$ that is at most $x$ has generation
at most $k+1$.  By Corollary~\ref{cor:gen}, every such element is either
\emph{small}, of generation at most $k$ and hence at most
\[
 U_k=4^k-2^{k-1},
\]
or \emph{large}, of generation $k+1$ and hence between $g_{k+1}$ and
$U_{k+1}=4^{k+1}-2^k$.

First consider $x-1=10\cdot4^k-3$.  It is larger than $U_{k+1}$ and
smaller than $g_{k+2}$, so it does not itself belong to $\Pset$.
Moreover, any sum of two eligible elements is at most
\[
 2U_{k+1}=8\cdot4^k-2^{k+1}<10\cdot4^k-3=x-1.
\]
Thus $x-1\notin F_2(\Pset)$.

Now suppose, for contradiction, that
\[
 x=q_1+\cdots+q_t,
 \qquad t\leq4,
 \qquad q_i\in\Pset.
\]
Let $N_{\rm b}$ be the number of large summands.  If $N_{\rm b}\leq2$,
then filling the remaining positions with the largest possible small
summands gives
\begin{align*}
 x
 &\leq2U_{k+1}+2U_k\\
 &=2(4^{k+1}-2^k)+2(4^k-2^{k-1})\\
 &=10\cdot4^k-3\cdot2^k
 <10\cdot4^k-2=x,
\end{align*}
a contradiction.  Hence $N_{\rm b}\geq3$.  But then the three large
summands alone have sum at least
\[
 3g_{k+1}=10\cdot4^k-1=x+1
\]
by \eqref{eq:three-g}, again a contradiction.  Therefore
$x\notin F_4(\Pset)$.
\end{proof}

\begin{corollary}[Order of $\Pset$]\label{cor:orderfive}
The set $\Pset$ is an asymptotic additive basis of order exactly $5$ for
the nonnegative integers.  More precisely, every integer $n\geq212$ lies
in $F_5(\Pset)$, whereas $F_4(\Pset)$ and $F_3(\Pset)$ each omit
infinitely many integers.
\end{corollary}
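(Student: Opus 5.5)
The statement has three parts, and we handle them in order: the upper bound, the failure of $F_4(\Pset)$, and the failure of $F_3(\Pset)$. Together with the infinite omissions, these show that the order is exactly $5$.

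\emph{Upper bound.} The inclusion $[212,\infty)\subseteq F_5(\Pset)$ is exactly Proposition~\ref{prop:fiveP}. Hence $\Pset$ is an asymptotic basis of order at most $5$.

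\emph{Failure of $F_4(\Pset)$.} Lemma~\ref{lem:obstruction} gives $x_k=10\cdot4^k-2\notin F_4(\Pset)$ for every $k\geq2$. These values are pairwise distinct, so $F_4(\Pset)$ omits infinitely many integers. Consequently no $h\leq4$ works asymptotically, and the order is exactly $5$.

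\emph{Failure of $F_3(\Pset)$.} This is immediate from the inclusion $F_3(\Pset)\subseteq F_4(\Pset)$, which holds because $0X$ is allowed and $F_k$ is a union over $j\leq k$. Every $x_k$ is therefore also omitted by $F_3(\Pset)$.

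No step here is a real obstacle. The only care needed is to confirm that Lemma~\ref{lem:obstruction} is stated for all $k\geq2$, which makes the omitted set genuinely infinite. The ``exactly $5$'' claim then rests on the asymptotic definition: if every sufficiently large integer lay in $F_4(\Pset)$, that would contradict the unbounded sequence $x_k$.
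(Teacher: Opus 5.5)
Your proposal is correct and follows essentially the same route as the paper. You take the upper bound $[212,\infty)\subseteq F_5(\Pset)$ from Proposition~\ref{prop:fiveP}, use Lemma~\ref{lem:obstruction} to show that $F_4(\Pset)$ misses the unbounded sequence $x_k$, and get the $F_3(\Pset)$ case from $F_3(\Pset)\subseteq F_4(\Pset)$, exactly as the paper does.
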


\begin{proof}
Proposition~\ref{prop:fiveP} gives
$[212,\infty)\subseteq F_5(\Pset)$.  For every $k\geq2$,
Lemma~\ref{lem:obstruction} gives $x_k\notin F_4(\Pset)$.  Since
$x_k\to\infty$, four summands do not suffice asymptotically.  The same is
true for three summands because $F_3(\Pset)\subseteq F_4(\Pset)$.
\end{proof}

\begin{corollary}[Multiples of $4$ requiring five summands]
\label{cor:five-multiples}
For every integer $k\geq3$,
\[
 M_k=10\cdot4^{k+1}-8
\]
satisfies $r(M_k)=5$.  Hence infinitely many multiples of $4$ require
exactly five positive primitive Dyck summands.
\end{corollary}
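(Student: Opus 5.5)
Write $M_k=10\cdot4^{k+1}-8=4m$ with $m=10\cdot4^k-2=x_k$. The plan has two halves: an upper bound $r(M_k)\le5$ and a lower bound $r(M_k)\ge5$.

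\emph{Upper bound.} It suffices to write $m$ as a sum of at most five elements of $\Pset$ and multiply each summand by $4$, using \eqref{eq:NpD-decomp}. Since $k\ge3$, we have $m\ge 10\cdot64-2=638\ge212$, so Proposition~\ref{prop:fiveP} gives $m\in F_5(\Pset)$. Hence $M_k\in F_5(4\cdot\Pset)$ and $r(M_k)\le5$.

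\emph{Lower bound.} Apply Lemma~\ref{lem:mod4-sumcriterion} with $\epsilon=0$ and $h=4$. Here $s$ ranges over $\{0,2,4\}$, so
\[
 r(M_k)\le4 \iff m\in F_4(\Pset)\cup\bigl(1+F_2(\Pset)\bigr)\cup\bigl(2+F_0(\Pset)\bigr).
\]
Each piece is then excluded in turn. First, $m=x_k\notin F_4(\Pset)$ and $m-1=x_k-1\notin F_2(\Pset)$ both follow directly from Lemma~\ref{lem:obstruction}, which applies since $k\ge3\ge2$. Second, $m-2\ne0$ trivially, so $m\notin 2+F_0(\Pset)$. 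Therefore $r(M_k)\ge5$, and with the upper bound $r(M_k)=5$. Since the $M_k$ are distinct multiples of $4$, the infinitude claim follows.

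The only delicate point is the upper bound: the hypothesis $k\ge3$ is exactly what puts $m$ above the threshold $212$ of Proposition~\ref{prop:fiveP}. For $k=2$ we would have $m=158$, which is not automatically covered and would need the finite data to settle. Everything else is bookkeeping with the mod-$4$ criterion.
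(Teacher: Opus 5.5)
Your proof is correct and matches the paper's argument essentially step for step. You get the upper bound from Proposition~\ref{prop:fiveP} applied to $x_k\ge212$. You get the lower bound from Lemma~\ref{lem:mod4-sumcriterion} with $\epsilon=0$ and $h=4$, excluding the three alternatives by Lemma~\ref{lem:obstruction} and by $x_k>2$.
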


\begin{proof}
Write $M_k=4x_k$.  Since $k\geq3$, one has $x_k\geq212$, and
Proposition~\ref{prop:fiveP} gives $x_k\in F_5(\Pset)$.  Multiplying such
a representation by $4$ shows that $r(M_k)\leq5$.

Suppose that $r(M_k)\leq4$.  Applying
Lemma~\ref{lem:mod4-sumcriterion} with $\epsilon=0$ and $h=4$ gives
\[
 x_k\in
 F_4(\Pset)
 \cup\bigl(1+F_2(\Pset)\bigr)
 \cup\{2\}.
\]
The first two alternatives contradict Lemma~\ref{lem:obstruction}, and
the last is impossible because $x_k>2$.  Thus $r(M_k)\geq5$, and equality
follows.
\end{proof}

\begin{proof}[Proof of Theorem~\ref{thm:asymptotic-order}]
Fix $k\geq2$ and put $m=x_k=10\cdot4^k-2$.  Then
\[
 N_k=4m+2=10\cdot4^{k+1}-6.
\]
By Lemma~\ref{lem:obstruction},
\[
 m\notin F_4(\Pset),
 \qquad
 m-1\notin F_2(\Pset).
\]
Corollary~\ref{cor:fivecrit} therefore gives $r(N_k)\geq6$.  The integer
$N_k$ is not one of the eleven exceptions in
Theorem~\ref{thm:classification}, so $r(N_k)\leq6$.  Hence $r(N_k)=6$.
Since $N_k\to\infty$, no $h\leq5$ bounds $r(N)$ for all sufficiently
large even $N$, while Theorem~\ref{thm:classification} gives the upper
bound $h=6$.  Therefore $h_{\mathrm{asym}}=6$.
\end{proof}

\begin{remark}[The two residue classes]
If $N=4n$, then $n\in F_5(\Pset)$ for all sufficiently large $n$, so
$r(N)\leq5$ eventually; Corollary~\ref{cor:five-multiples} shows that
equality occurs infinitely often.  If $N=4m+2$, the number of copies of
$2$ must be odd.  For $m=x_k$, the one-copy case would require
$m\in F_4(\Pset)$ and the three-copy case would require
$m-1\in F_2(\Pset)$; Lemma~\ref{lem:obstruction} excludes both.  This
residue class raises the relative asymptotic order from $5$ to $6$.
\end{remark}

\section{Effective verification and reproducibility}\label{sec:verification}

The infinite assertions in this paper are proved symbolically.  Computation
is used only for explicitly displayed finite certificates and for an
independent check of the representation function.

For a finite set $X\subseteq\mathbb Z_{\geq0}$, a summand count $k$, and
a cutoff $U$, exact membership in $jX\cap[0,U]$ is computed recursively by
\[
 S_0=\{0\},\qquad
 S_{j+1}=(S_j+X)\cap[0,U]
 \quad(0\leq j<k).
\]
With Boolean arrays, this requires $O(kU|X|)$ time and $O(U)$ working
space.  Mixed certificates such as $2H+3L$ are checked by the same
recursion with the prescribed finite set at each addition step.  Every
contained interval is then scanned exhaustively, and every claimed
complement is computed exactly.

The two Python source files used for the finite checks are available at the
following commit-pinned links and can also be distributed with the
manuscript as supplementary material.  The program
\href{https://github.com/growupkuriyama-hub/lean_cfg_project/blob/942c264/LeanCfgProject/PrimDyck/primitive_dyck_representation.py}
{\path{primitive_dyck_representation.py} at commit \texttt{942c264}}
enumerates primitive Dyck numbers and computes $r(N)$ by unbounded
coin-change dynamic programming.  Its default run generates the first
$50{,}000$ values $r(2n)$ and checks all indices corresponding to the
values seven and eight in Theorem~\ref{thm:classification}.

The finite certificates are reproduced by
\href{https://github.com/growupkuriyama-hub/lean_cfg_project/blob/942c264/LeanCfgProject/PrimDyck/verify_certificates_for_paper.py}
{\path{verify_certificates_for_paper.py} at commit \texttt{942c264}}.
The program uses only exact integer arithmetic and the Python standard
library.  It verifies
\eqref{eq:E-small-certificate}--\eqref{eq:E-tail-certificate},
\eqref{eq:finite-interval-certificate},
\eqref{eq:finite-five-lower-certificate}, the small multiples of four,
and the finite exceptional-set identity \eqref{eq:finiteexception}.  It
also performs an independent dynamic-programming cross-check of $r(N)$ up
to a user-selected bound.

In the recorded reference run under CPython~3.14.6, the programs generated
the first $50{,}000$ values of the representation function and checked all
eleven indices with value at least seven.  They also reproduced every
finite certificate, independently verified $r(N)$ through $N=50{,}000$,
and checked the recurring obstruction and exact-order instances through
$k=5$.
\section{Conclusion}\label{sec:conclusion}

Primitive Dyck words provide a natural boundary example for combining
regular approximation with genuinely context-free structure.  In the
proof developed here, the regular language $11(01\mid10)^*00$ supplies the
covering for one residue class, while the exact positional Motzkin coding
of the full deterministic context-free language supplies the covering for
the other.  The interval digit-lifting theorem turns finite arithmetic
certificates into infinite tails and, constructively, into logarithmic-
depth representation algorithms.  The gaps between Motzkin-coded
generations yield recurring lower-bound obstructions.

The resulting value set has three exact additive orders.  Its relative
order on the nonnegative even integers is eight, its relative asymptotic
order is six, and the halved family $\Pset$ has asymptotic additive order
five.  The sharp six-summand threshold is $848$, but six summands remain
necessary infinitely often.

The method suggests two broader directions.  First, one may seek other
nonregular language-defined sets admitting an arithmetically effective
regular underapproximation together with a complementary block coding of
the full language.  Second, the interval digit-lifting principle can be
studied for other bases and digit intervals, with the finite seed
certificates synthesized automatically.  For the present family, a more
specific open problem is to characterize all even $N$ satisfying
$r(N)=6$; computations indicate further block structure near
$10\cdot4^k$, but its exact boundaries remain unknown.

\end{document}